\newcommandx*\seq[5][1=1, 3=n, 4={,}, 5=\dots]{#2_{#1}#4#5#4#2_{#3}}
\newcommandx*\seqx[5][1=1, 3=n, 4={,}, 5=\dots]{{\renewcommand{\i}{#1}#2}#4#5#4{\renewcommand{\i}{#3}#2}}
\newcommand{\alphaprod}{\seq{\TypeA}[n][\times]}
\newcommand{\TypeAprod}{\seq{\TypeA}[n][\times]}
\newcommandx*{\upto}[2][1=1]{#1,\dots,#2}
\definecolor{darkgray}{rgb}{0.31,0.31,0.33}
\definecolor[named]{lipicsGray}{rgb}{0.31,0.31,0.33}
\definecolor[named]{lipicsBulletGray}{rgb}{0.60,0.60,0.61}
\definecolor[named]{lipicsLineGray}{rgb}{0.51,0.50,0.52}
\definecolor[named]{lipicsLightGray}{rgb}{0.85,0.85,0.86}
\definecolor[named]{lipicsYellow}{rgb}{0.99,0.78,0.07}
\definecolor[named]{keyword}{HTML}{e0004a}
\definecolor[named]{constructor}{HTML}{009304}
\definecolor[named]{darkviolet}{HTML}{1b0160}
\lstdefinestyle{splay-style}{%
  basicstyle=\color{darkviolet}\ttfamily\small,
  extendedchars=true,
  alsoletter={*,'},
  keywordstyle=[1]\color{keyword}\small,
  keywordstyle=[2]\color{constructor}\small,
  otherkeywords={(,)},
  keywords=[1]{if,then,else,match,with,let,in},
  keywords=[2]{leaf,true,false,(,)},
  sensitive=true,
  morecomment=[l]{!}, 
  commentstyle=\color{codeGreen}, 
  numberstyle=\tiny\color{codeGray}
}
\lstdefinelanguage{splay}{style=splay-style}
\tiny\color{black},
\newcommand{\RED}[1]{{\color{red}#1}}
\title{Type-Based Analysis of Logarithmic Amortised Complexity}
\titlerunning{Logarithmic Amortised Complexity}
\author{Martin Hofmann~$\dag$}{Department of Computer Science\\
  LMU Munich, Germany}{}{}{}
\author{Lorenz Leutgeb}{Institut für Logic and Computation 192/4\\
  Technische Universität Wien}{lorenz@leutgeb.xyz}{https://orcid.org/0000-0003-0391-3430}{}
\author{Georg Moser}{Department of Computer Science \\
  University of Innsbruck, Austria}{georg.moser@uibk.ac.at}{https://orcid.org/0000-0001-9240-6128}{}
\author{David Obwaller}{Department of Computer Science \\
  University of Innsbruck, Austria}{david.obwaller@uibk.ac.at}{https://orcid.org/0000-0002-4925-4744}{}
\author{Florian Zuleger}{Institut für Logic and Computation 192/4\\
  Technische Universität Wien}{zuleger@forsyte.at}{https://orcid.org/0000-0003-1468-8398}{}
\authorrunning{Hofmann et al.}
\keywords{%
analysis of algorithms;
amortised resource analysis;
functional programming;
self-adjusting data structures;
automation}
\begin{document}
\maketitle

\medskip

\begin{center}
  {\Large\textbf{In Memoriam: Martin Hofmann}}
\end{center}

\medskip

\begin{abstract}
  We introduce a novel amortised resource analysis couched in a type-and-effect system.
  Our analysis is formulated in terms of the physicist's method of amortised analysis, and is
  potential-based. The type system makes use of logarithmic potential functions and is the first such system
  to exhibit \emph{logarithmic amortised complexity}.
  With our approach we target the automated analysis of self-adjusting data structures, like splay trees, which so far have only manually been analysed in the literature. In particular, we have implemented a semi-automated prototype, which successfully analyses the zig-zig case of \emph{splaying}, once the type annotations are fixed.
\end{abstract}

\section{Introduction}

Amortised analysis as pioneered by Sleator and Tarjan~\cite{ST:1985,Tarjan:1985} is a method for the worst-case cost analysis of data structures.
The innovation of amortised analysis lies in considering the cost of a single data structure operation as part of a sequence of data structures operations.
The methodology of amortised analysis allows one to assign a low (e.g., constant or logarithmic) amortised cost to a data structure operation even though the worst-case cost of a single operation might be high (e.g., polynomial or worse).
The setup of amortised analysis guarantees that for a sequence of data structure operations the worst-case cost is indeed the number of data structure operations times the amortised cost.
In this way amortised analysis provides a methodology for worst-case cost analysis.

Starting with the initial proposal, one of the objectives of amortised analysis has been to conduct a worst-case cost analysis for self-adjusting binary search trees, such as \emph{splay trees}~\cite{ST:1985,Tarjan:1985}.
These data structures have the behaviour that a single data structure operation might be expensive (ie.\ linear in the size of the tree) but the cost is guaranteed to `average out' in a sequence of data structure operations (ie.\ logarithmic in the size of the tree).
Amortised analysis has been designed to provide a framework for this kind of reasoning on the cost of data structure operations.

The automated cost analysis of imperative, functional, logic and object-oriented programs as well as of more abstract programming paradigms such a term rewriting systems is an active research topic~\cite{AAGP08,ADFG:2010,conf/lpar/BlancHHK10,conf/pldi/GulwaniZ10,AvanziniEM11,AAGP:2011,conf/sas/Alonso-BlasG12,HBCLMMP:2012,HM:2014,ADM15,HM:2015,avanzini2016tct,Flores-Montoya:2017,Giesl:2017,MS:2020}.
Most research has focused on the inference of polynomial bounds on the worst-case cost of the program under analysis.
A few papers also target the inference of exponential and logarithmic bounds~\cite{AAGP08,AvanziniEM11,ChatterjeeFG17,WWC17,conf/fossacs/Kahn020,WM:2020}.
Some of the cited approaches are able to conduct an automated amortised analysis in the sense of Sleator and Tarjan:
The work on type-based cost analysis by Martin Hofmann and his collaborators~\cite{HofmannJ03,JLHSH09,JHLH10,Hoffmann:2011,HAH:2012b,HM:2014,HM:2015,HS:2015a,HS:2015b,JVFH:2017,HDW:2017}, which we discuss in more detail below, directly employs potential functions as proposed by Sleator and Tarjan~\cite{ST:1985,Tarjan:1985}.
For imperative programs, a line of work infers cost bounds from lexicographic ranking functions using arguments that implicitly achieve an amortised analysis~\cite{SZV:2014,SinnZV15,SinnZV17,conf/vmcai/FiedorHRSVZ18} (for details we refer the reader to~\cite{SinnZV17}).
The connection between ranking functions
and amortised analysis has also been discussed in the context of term rewrite systems~\cite{HM:2014}.
Proposals that incorporate amortised analysis within the recurrence relations approach to cost analysis have been discussed in~\cite{conf/sas/Alonso-BlasG12,Flores-Montoya:2017}.
Still, to the best of our knowledge none of the cited approaches is able to conduct a worst-case cost analysis for self-adjusting binary search trees such as splay trees.
One notable exception is~\cite{Nipkow:2015} where
the correct amortised analysis of splay trees~\cite{ST:1985,Tarjan:1985} and other data
structures is certified in \isabelle\
with some tactic support.
However, it is not clear at all if the approach can be further automated.

In this article we take the first step towards the automated analysis of logarithmic amortised cost.
We extend the line of work by Martin Hofmann and his collaborators on amortised analysis, where the search for suitable potential functions is encoded as a type-and-effect systems.
This line of work has lead to several successful tools for deriving accurate bounds on the resource usage of functional \cite{HAH:2012b,JLHSH09,ADM15,AM:2016}, imperative programs \cite{HR:2013,HS14}, as well as term rewriting systems \cite{avanzini2016tct,HM:2014,HM:2015,MS:2020}.
The cited approaches employ a variety of potential functions:
While initially confined to inferring linear cost~\cite{HofmannJ03}, the methods were subsequently extended to cover polynomial \cite{HoffmannH10a}, multivariate polynomial
\cite{HAH:2012b}, and also exponential cost~\cite{HR:2013}.
We for the first time propose a type system that supports logarithmic potential functions, significantly
extending and correcting an earlier note towards this goal~\cite{HM:2018}.

Our analysis is couched in a simple core functional language just sufficiently rich to provide a full definition of our motivating example: \emph{splaying}.
We employ a big-step semantics, following similar approaches in the literature. Further, our type system is geared towards runtime as computation cost (ie.\ we assign a unit cost to each function call and zero cost to every other program statement). It is straightforward to generalise this type system to
other monotone cost models. With respect to non-monotone costs, like eg.\ heap usage, we expect the type system can also be readily be adapted, but this
is outside the scope of the paper.

The type system has been designed with the goal of automation.
As in previous work on type-based amortised analysis, the type system infers constraints on unknown coefficients of template potential functions in a syntax directed way from the program under analysis.
Suitable coefficients can then be found automatically by solving the collected constraints with a suitable constraint solver (ie.\ an SMT solver that supports the theory of linear arithmetic).
The derivation of constraints is straightforward for all syntactic constructs of our programming language.
However, our automated analysis also requires \emph{sharing} and \emph{weakening} rules.
The latter supports the comparison of different potential functions.
As our potential functions are logarithmic, we cannot directly encode the comparison between logarithmic expressions within the theory of linear arithmetic.
Here, we propose several ideas for \emph{linearising} the required comparison of logarithmic expressions. The obtained linear constraints can be then be added to the constraint system.
Our proposed linearisation makes use of
\begin{inparaenum}[(i)]
  \item mathematical facts about the logarithm, referred to as \emph{expert knowledge},
  \item Farkas' lemma for turning the universally-quantified premise of the weakening rule into an existentially-quantified statement that can be added to the constraint system and
  \item finally a subtly modification of Schoenmakers potential.
  \end{inparaenum}

We report on preliminary results for the automated amortised analysis of the splay function.
Our implementation semi-automatically verifies the correctness of a type annotation with logarithmic amortised cost for the splay function
(more specifically for the zig-zig case of the splay function)
using the constraints generated by the type system.
We believe that the ideas presented in this article can be extended beyond the case of splay trees in order to support the analysis of similar self-adjusting data structures such as the ones used in the evaluation of~\cite{Nipkow:2015}.
Summarising, we make the following contributions:
\begin{itemize}
  \item We propose a new class of template potential functions suitable for logarithmic amortised analysis;
      these potential functions in particular include a variant of Schoenmakers' potential (a key building block for the analysis of the splay function)
      and logarithmic expressions.
  \item We present a type-and-effect system for potential-based resource analysis capable of expressing logarithmic amortised costs, and prove its soundness.
  \item We report on a preliminary implementation for the logarithmic amortised analysis of the splay function.
    With respect to the zigzig-case of the splay function, our prototype is able to automatically check
    that the amortised cost is bounded by $3 \log(t) + 1$. All former results to this respect required a manual
    analysis.
\end{itemize}

\paragraph*{Outline}

The rest of this paper is organised as follows.
In Section~\ref{Primer}, we review the key concepts underlying type-based amortised analysis and present our ideas for their extension.
In Section~\ref{Splaying}, we introduce a simple core language underlying our reasoning and provide a full definition of splaying, our running example.
The employed class of potential functions is provided in Section~\ref{ResourceFunctions},
while the type-and-effect system is presented in Section~\ref{Typesystem}. In Section~\ref{Automation} we report on our ideas for implementing the \emph{weakening} rule.
Concretely, we see these ideas at work in Section~\ref{Analysis}, where we employ the established type-and-effect system on the motivating
example of \emph{splaying}. In Section~\ref{sec:implementation}, we present our implementation and design choices in automation.
In Section~\ref{Related}, we present related work and finally, we conclude in Section~\ref{Conclusion}.

\section{Setting the Stage}
\label{Primer}

Our analysis is formulated in terms of the physicist's method of amortised analysis in the style of Sleator and Tarjan~\cite{ST:1985,Tarjan:1985}.
This method assigns a \emph{potential} to data structures of interest and defines the \emph{amortised cost}
of an operation as the sum of the actual cost plus the change of the potential through execution of the operation, ie.\
the central idea of an amortised analysis as formulated by Sleator and Tarjan is to choose a potential function
$\phi$ such that
\begin{equation*}
\phi(v) + a_f(v) = c_f(v) + \phi(f(v))
\tkom
\end{equation*}
holds for all inputs $v$ to a function $f$, where $a_f$, $c_f$ denote the amortised and total cost, respectively,
of executing $f$. Hofmann et al.~\cite{HofmannJ03,HAH11,HAH:2012,HAH:2012b,HM:2014,HM:2015} provide a
generalisation of this idea to a set of potential functions $\phi,\psi$, such that 
\begin{equation*}
\label{eq:AARA}  
\phi(v) \geqslant c_f(v) + \psi(f(v))
\tkom
\end{equation*}
holds for all inputs $v$. This allows to ready off an upper bound on the amortised cost of $f$, ie.\ we have $a_f(v) \leqslant \phi(v)-\psi(v)$. We add that the above inequality indeed generalises the original formulation, which can be seen by setting $\phi(v) \defsym a_f(v) + \psi(v)$.

In this paper, we present a type-based resource analysis based on the idea of potential functions that can infer logarithmic amortised cost.
Following previous work by Hofmann et al.,
we tackle two key problems in order to achieve a semi-automated logarithmic amortised analysis:
\begin{inparaenum}[1)]
  \item Automation is achieved by a \emph{type-and-effect system} that uses \emph{template potential functions}, ie.\ functions of a fixed shape with indeterminate coefficients.
Here, the key challenge is to identify templates that are suitable for logarithmic analysis and that are closed under the basic operations of the considered programming language.
\item In addition to the actual amortised analysis with costs, we employ \emph{cost-free} analysis as a subroutine, setting the amortised $a_f$ and actual costs $c_f$ of all functions $f$ to zero.
This enables a \emph{size analysis} of sorts, because the inequality $\phi(v) \geqslant \psi(f(v))$ bounds the size of the potential $\psi(f(v))$ in terms of the potential $\phi(v)$.
The size analysis we conduct allows lifting the analysis of a subprogram to a larger context, which is crucial for achieving a \emph{compositional analysis}.
\end{inparaenum}
We overview these two aspects in the sequel of the section.

\subsection{Type-and-effect System}

To set the scene, we briefly review amortised analysis formulated as a type-and effect system up to and including the multivariate polynomial analysis, cf.~\cite{JHLH10,HoffmannH10a,HoffmannH10b,HAH11,HAH:2012b,HM:2014,HM:2015,HDW:2017,JVFH:2017}.

\emph{Polynomial Amortised Analysis.}
Suppose that we have types $\TypeA, \TypeB, \TypeC, \dots$ representing sets of values.
We write $\typedomain{\TypeA}$ for the set of values represented by type $\TypeA$.
Types may be constructed from base types such as Booleans and integers, denoted by $\Base$, and by type formers such as list, tree, product, sum, etc.
For each type $\TypeA$, we define a (possibly infinite) set of \emph{basic potential functions} $\BF(\TypeA)\colon \typedomain{\TypeA} \to \Rplus$.
Thus, if $p\in \BF(\TypeA)$ and $v\in \typedomain{\TypeA}$ then $p(v)\in \Rplus$.
An \emph{annotated type} is a pair of a type $\TypeA$ and a function $Q:\BF(\TypeA)\rightarrow \Rplus$ providing a coefficient for each \emph{basic potential function}.
The function $Q$ must be zero on all but finitely many basic potential functions.
For each annotated type $\annotatedtype{\TypeA}{Q}$, the \emph{potential function} $\phi_Q:
\typedomain{\TypeA}\rightarrow \Rplus$ is then given by
\begin{equation*}
  \phi_Q(v) \defsym \sum_{p\in \BF(\TypeA)} Q(p) \cdot p(v) \tpkt
\end{equation*}

By introducing product types, one can regard functions with several arguments as unary functions, which
allows for technically smooth formalisations, cf.~\cite{HoffmannH10a,HoffmannH10b,Hoffmann:2011};
the analyses in the cited papers are called \emph{univariate} as the set of basic potential functions $\BF(\TypeA)$ of a product type $\TypeA$ is given directly.
In the later \emph{multivariate} versions of automated amortised analysis~\cite{HAH11,HAH:2012b,HM:2015} one takes a more fine-grained approach to products.
Namely, one then sets (for arbitrary $n$)
\begin{align*}
\BF(\TypeAprod) & \defsym \seqx{\BF(\TypeA_\i)}[n][\times] \tkom \\
(\seq{p})(\seq{v}) & \defsym \prod_{i = 1}^n p_i(v_i)
\tpkt
\end{align*}
Thus, the basic potential function for a product type is obtained as the multiplication of the basic potential functions of its constituents.%
\footnote{Suppose that for each type $\TypeA$ there exists a distinguished element $u\in \BF(\TypeA)$ with $u(a) = 1$ for all $a\in\typedomain{\TypeA}$. Then, the multivariate product types contain all (linear combinations) of the basic potential functions, extending earlier univariate definitions of product types.}

\emph{Automation.}
The idea behind this setup is that the basic potential functions $\BF(\TypeA)$ are suitably chosen and fixed by the analysis designer, the coefficients $Q(p)$ for $p \in \BF(\TypeA)$, however, are left indeterminate and will (automatically) be fixed during the analysis.
For this, constraints over the unknown coefficients are collected in a syntax-directed way from the function under analysis and then solved by a suitable constraint solver.
The type-and-effect system formalises this collection of constraints as typing rules, where for each construct of the considered programming language a typing rule is given that corresponds to constraints over the coefficients of the annotated types.
Expressing the quest for suitable type annotations as a type-and-effect system allows one to compose typing judgements in a syntax-oriented way without the need for fixing additional intermediate results, which is often required by competing approaches.
This syntax-directed approach to amortised analysis has been demonstrated to work well for datatypes like lists or trees whose basic potential functions are polynomials over the length of a list resp. the number of nodes of a tree.
One of the reasons why this works well is, e.g., that functional programming languages typically include dedicated syntax for list construction and that polynomials are closed under addition by one (ie.\ if $p(n)$ is a polynomial, so is $p(n+1)$), supporting the formulation of a suitable typing rule for list construction, cf.~\cite{HoffmannH10a,HoffmannH10b,Hoffmann:2011,HAH11,HAH:2012b}.
The syntax-directed approach has been shown to generalise from lists and trees to general inductive data types, cf.~\cite{HM:2014,HM:2015,MS:2018a}.

\emph{Logarithmic Amortised Analysis.}
We now motivate the design choices of our type-and-effect system.
The main objective of our approach is the automated analysis of data-structures such as splay trees, which have \emph{logarithmic} amortised cost.
The amortised analysis of splay trees is tricky and requires choosing an adequate potential function:
our work makes use of a variant of Schoenmakers' potential, $\rk(t)$ for a tree $t$, cf.~\cite{Schoenmakers93,Nipkow:2015}, defined inductively by
\begin{align*}
  \rk(\leaf) &\defsym 1,\\
  \rk(\tree{l}{d}{r}) &\defsym \rk(l) + \log(\size{l}) + \log(\size{r}) + \rk(r)
  \tkom
\end{align*}
where $l$, $r$ are the left resp. right child of the tree $\tree{l}{d}{r}$, $\size{t}$ denotes the number of leaves of a tree $t$, and $d$ is some data element that is ignored by the potential function.
Besides Schoenmakers' potential we need to add further basic potential functions to our analysis. This is motivated as follows:
Similar to the polynomial amortised analysis discussed above we want that the basic potential functions can express the construction of a tree, e.g., let us consider the function
$$f(x,d,y) \defsym \tree{x}{d}{y},$$
which constructs the tree $\tree{x}{d}{y}$ from some trees $x,y$ and some data element $d$, and let us assume a constant cost $c_f(x,y) = 1$ for the function $f$.
A type annotation for $f$ is given by
\begin{align*}
&\underbrace{\rk(x) + \log(\size{x}) + \rk(y) + \log(\size{y}) + 1}_{\phi(x,y)}  \geqslant c_f(x,y) + \underbrace{\rk(f(x,d,y))}_{\psi(f(x,y))} \tkom
\end{align*}
ie.\ the potential $\phi(x,y)$ suffices to pay for the cost $c_f$ of executing $f$ and the potential of the result $\psi(f(x,y))$ (the correctness of this type can be established directly from the definition of Schoenmakers' potential).
As mentioned above, the logarithmic expressions in $\phi(x,y)$, ie.\ $\log(\size{x})+\log(\size{y})+1$, specify the \emph{amortised costs} of the operation.

We see that in order to express the potential $\phi(x,y)$ we also need the basic potential functions $\log(\size{t})$ for a tree $t$.
In fact, we will choose the slightly richer set of basic potential functions
\begin{equation*}
p_{(a,b)}(t) = \log(a\size{t}+b) \tkom
\end{equation*}
where $a,b \in \N$ and $t$ is a tree.
We note that by setting $a=0$ and $b=2$ this choice allows us to represent the constant function $u$ with $u(t) = 1$ for all trees $t$.
We further note that this choice of potential functions is sufficiently rich to express that $p_{(a,b)}(t) = p_{(a,b+a)}(s)$ for trees $s,t$ with $\size{t} = \size{s} +1$, which is needed for precisely expressing the change of potential when a tree is extended by one node.
Further, we define basic potential functions for products of trees by setting
\begin{equation*}
p_{(\seq{a},b)}(\seq{t}) = \log(\seqx{a_\i \cdot \size{t_\i}}[n][+]+b) \tkom
\end{equation*}
where $\seq{a},b \in \N$ and $\seq{t}$ is a tuple of trees.
This is sufficiently rich to state the equality
$p_{(a_0,\seq{a},b)}(x_1,\seq{x}) = p_{(a_0+\seq{a},b)}(\seq{x})$,
which supports the formulation of a \emph{sharing} rule, which in turn is needed for supporting the let-construct in functional programming;
cf.~\cite{HAH11,HAH:2012b,HM:2015} for a more detailed exposition on the sharing rule and the let-construct.

\subsection{Cost-free Semantics}

\emph{Polynomial Amortised Analysis.}
We begin by reviewing the cost-free semantics underlying previous work~\cite{HoffmannH10b,Hoffmann:2011,HAH11,HAH:2012b} on polynomial amortised analysis.
Assume that we want to analyse the composed function call
$g(f(\vec{x}),\vec{z})$ using already established analysis results for $f(\vec{x})$ and $g(y,\vec{z})$.
Suppose we have already established that for all $\vec{x}$, $y$, $\vec{z}$ we have:
\begin{align}
&\phi_0(\vec{x}) \geqslant c_f(\vec{x}) + \beta(f(\vec{x}))
\label{AARA:I}
  \\
&\phi_i(\vec{x}) \geqslant \phi'_i(f(\vec{x})) \quad \text{for all $i$ ($0 < i \leqslant n$)}
\label{AARA:II}
  \\
&\beta(y)+\gamma(\vec{z})+\sum_{i=1}^n \phi'_i(y)\phi_i''(\vec{z}) \geqslant c_g(y,\vec{z}) + \psi(g(y,\vec{z}))
\label{AARA:III}
\tkom
\end{align}
where as in the multivariate case above, $n$ is arbitrary and equations~(\ref{AARA:I}) and~(\ref{AARA:III}) assume cost, while equation~(\ref{AARA:II}) is \emph{cost-free}.
Then, we can conclude for all $\vec{x},\vec{z}$ that
\begin{equation*}
  \underbrace{\phi_0(\vec{x})+\gamma(\vec{z})+\sum_{i=1}^n \phi_i(\vec{x})\phi_i''(\vec{z})}_{\phi(\vec{x},\vec{z})} \geqslant c_f(\vec{x})+ c_g(f(\vec{x}),\vec{z}) + \psi(g(f(\vec{x}),\vec{z}))
  \tkom
\end{equation*}
guaranteeing that the potential $\phi(\vec{x},\vec{z})$ suffices to pay for the cost $c_f(\vec{x})$ of computing $f(\vec{x})$, the cost $c_g(f(\vec{x}),\vec{z})$ of computing $g(f(\vec{x}),\vec{z})$ and the potential $\psi(g(f(\vec{x}),\vec{z}))$ of the result $g(f(\vec{x}),\vec{z})$.
We note that the correctness of this inference hinges on the fact that we can multiply equation~\eqref{AARA:II} with $\phi''_i(\vec{z})$ for $i=1\dots n$, using the monotonicity of the multiplication operation (note that potential functions are non-negative).
We highlight that the multiplication argument works well with cost-free semantics, and enables lifting the resource analysis of $f(\vec{x})$ and $g(y,\vec{z})$ to the composed function call $g(f(\vec{x}),\vec{z})$.

\emph{Remark.}
We point out that the above exposition of cost-free semantics in the context of polynomial amortised analysis differs from the motivation given in the literature~\cite{HoffmannH10b,Hoffmann:2011,HAH11,HAH:2012b},
where cost-free semantics are motivated by the quest for \emph{resource polymorphism},
which is the problem of computing (a representation of) all polynomial potential functions (up to a fixed maximal degree) for the program under analysis;
this problem has been deemed of importance for the handling of non tail-recursive programs.
We add that for the amortised cost analysis of inductively generated data-types, the cost-free semantics proved necessary even for handling basic data-structure manipulations~\cite{HM:2014,HM:2015,MS:2020}.
In our view, cost-free semantics incorporate a \emph{size analysis} of sorts.
We observe that equation~\eqref{AARA:II} states that the potential of the result of the evaluation of $f(\vec{x})$ is bounded by the potential of the function arguments $\vec{x}$, without accounting for the costs of this evaluation.
Thus, for suitably chosen potential functions $\phi_i, \phi'_i$ can act as \emph{norms} and capture the size of the result of the evaluation $f(\vec{x})$ in relation to the size of the argument.
As stated above, a separated cost and size analysis enables a compositional analysis, an insight that we also exploit for logarithmic amortised analysis.

\emph{Logarithmic Amortised Analysis.}
Similar to the polynomial case, we want to analyse the composed function call $g(f(\vec{x}),\vec{z})$ using already established analysis results for $f(\vec{x})$ and $g(y,\vec{z})$.
However, now we extend the class of potential functions to sublinear functions.
Assume that we have already established that
\begin{align}
&\phi_0(\vec{x}) \geqslant c_f(\vec{x}) + \beta(f(\vec{x}))
\label{AARA:IV}
\\
&\log(\phi_i(\vec{x})) \geqslant \log(\phi_i'(\vec{x})) \qquad \text{for all $i$ ($0 < i \leqslant n$)}
\label{AARA:V}
\\
&\beta(y)+\gamma(\vec{z})+\sum_{i=1}^n \log( \phi'_i(y)+\phi_i''(\vec{z})) \geqslant c_g(y,\vec{z}) + \psi(g(y,\vec{z}))
\label{AARA:VI}
\tkom
\end{align}
where equations~(\ref{AARA:IV}) and~(\ref{AARA:VI}) assume cost, while equation~(\ref{AARA:V}) is \emph{cost-free}.
Equations~(\ref{AARA:IV}) and~(\ref{AARA:V}) represent the result of an analysis of $f(\vec{x})$ (note that these equations do not contain the parameters $\vec{z}$, which will however be needed for the analysis of $g(f(\vec{x}),\vec{z})$), and equation~(\ref{AARA:VI}) the result of an analysis of $g(y,\vec{z})$.
Then, we can conclude for all $\vec{x},y,\vec{z}$ that
\begin{equation*}
  \underbrace{\phi_0(\vec{x})+\gamma(\vec{z})+\sum_{i=1}^n \log(\phi_i(\vec{x})+\phi_i''(\vec{z}))}_{\phi(\vec{x},\vec{z})} \geqslant c_f(\vec{x}) +  c_g(f(\vec{x}),\vec{z}) + \psi(g(f(\vec{x}),\vec{z}))
  \tkom
\end{equation*}
guaranteeing that the potential $\phi(\vec{x},\vec{z})$ suffices to pay for the cost $c_f(\vec{x})$ of computing $f(\vec{x})$, the cost $c_g(f(\vec{x}),\vec{z})$ of computing $g(f(\vec{x}),\vec{z})$ and the potential $\psi(g(f(\vec{x}),\vec{z}))$ of the result $g(f(\vec{x}),\vec{z})$.
Here, we crucially use monotonicity of the logarithm function, as formalised in Lemma~\ref{lem:log-inequality}.
This reasoning allows us to lift isolated analyses of the functions $f(\vec{x})$ and $g(y,\vec{z})$ to the composed function call $g(f(\vec{x}),\vec{z})$;
this is what is required for a compositional analysis!

\emph{Example.}
We now illustrate the compositional reasoning on an example.
We consider the function $f(x,d,y) \defsym \tree{x}{d}{y}$, which takes two trees $x,y$ and some data element $d$ and returns the tree $\tree{x}{d}{y}$.
Assume that we already have established that
\begin{align}
&\psi(x) + \psi(y) + 1 \geqslant c_f(x,y) + \rk(f(x,d,y))
\label{AARA:VII}
\\
&\log(\size{x}+\size{y}) \geqslant \log(\size{f(x,d,y)})
\label{AARA:VIII} \tkom
\end{align}
where $\psi(u) = \rk(u) + \log(\size{u})$, $c_f(x,y) = 1$, and $d$ is an arbitrary data element, which is not relevant for the cost analysis of $f$.
We now want to analyse the composed function $h(x,a,y,b,z) := f(f(x,a,y),b,z)$.
We will use the above reasoning, instantiating
equations~(\ref{AARA:IV}) and~(\ref{AARA:V}) with equations~(\ref{AARA:VII}) and~(\ref{AARA:VIII}) for the inner function call $f(x,a,y)$, and equation~(\ref{AARA:VI}) with the sum of equations~(\ref{AARA:VII}) and~(\ref{AARA:VIII}) for the outer function call $f(u,b,z)$.
As argued above, we can then conclude for all $x,y,z$ that
\begin{gather*}
  \psi(x) + \psi(y) + \psi(z) + \log(\size{x} + \size{y}) + \log(\size{x} + \size{y} + \size{z}) + 2 \geqslant {}\\
  {} \geqslant c_f(x,a,y) + c_f(f(x,a,y),b,z) + \psi(f(f(x,y),z))
\tkom
\end{gather*}
is a valid resource annotation for $h(x,a,y,b,z) := f(f(x,a,y),b,z)$; we have used equation~(\ref{AARA:VIII}) twice in this derivation, once as $\log(\size{x}+\size{y}) \geqslant \log(\size{f(x,a,y)})$ and once lifted as $\log(\size{x}+\size{y}+\size{z}) \geqslant \log(\size{f(x,a,y)}+\size{z})$.
Kindly note that the above example appears in similar form as part of the analysis of the \lstinline{splay} function described in Section~\ref{Analysis}.

\section{Motivating Example}
\label{Splaying}

In this section, we introduce the syntax of a suitably defined core (first-order) programming language
to be used in the following. Furthermore, we recall the definition of \emph{splaying}, following the presentation
by Nipkow in~\cite{Nipkow:2015}. Splaying constitutes the motivating examples for the type-based logarithmic
amortised resource analysis presented in this paper.

To make the presentation more succinct, we assume only the following types: Booleans $\Bool = \{\true, \false\}$,
an abstract base type $\Base$ (abbrev.\ $\BaseShort$), product types, and binary trees $\Tree$ (abbrev.\ $\TreeShort$),
whose internal nodes are labelled with elements $\typed{b}{\Base}$.
We use lower-case Greek letters for the denotation of types.
Elements $\typed{t}{\Tree}$ are defined by the following grammar which fixes notation.
\begin{equation*}
  t ::= \leaf \mid \tree{t_1}{b}{t_2} \tpkt
\end{equation*}
The size of a tree is the number of leaves: $\size{\leaf} \defsym 1$,
$\size{\tree{t}{a}{u}} \defsym \size{t} + \size{u}$.

Expressions are defined as follows and given in \emph{let normal form} to
simplify the presentation of the semantics and typing rules. In order to ease
the readability, we make use of some mild syntactic sugaring in the presentation
of actual code.

\begin{definition}
\begin{align*}
  \circ & ::= \textup{\lstinline{<}}
	\mid \textup{\lstinline{>}}
	\mid \textup{\lstinline{=}}
  \\
  e & ::= f\ x_1~\dots~x_n \\
    & \mid \true \mid \false \mid e_1 \circ e_2
    && \mid \cif\ x\ \cthen\ e_1\ \celse\ e_2
  \\
    & \mid \textup{\tree{x_1}{x_2}{x_3}} \mid \leaf
    && \mid \match\ x\ \with\
      \textup{\lstinline{|}}\ \leaf\ \arrow e_1\
      \textup{\lstinline{|}}\ \textup{\tree{x_1}{x_2}{x_3}}\ \arrow e_2
  \\
    & \mid \vlet\ x~\equal~e_1\ \vin\ e_2
    && \mid x
\end{align*}
\end{definition}

We skip the standard definition of integer constants $n \in \Z$ as well as variable
declarations, cf.~\cite{Pierce:2002}. Furthermore, we omit binary operations and focus on the bare
essentials for the comparison operators. For the resource analysis these are not of importance,
as long as we assume that no actual costs are emitted.

A \emph{typing context} is a mapping from variables $\VS$ to types.
Type contexts are denoted by upper-case Greek letters.
A program $\Program$ consists of a signature $\FS$ together with a set of function definitions of the form
$f(\seq{x}) = e$, where the $x_i$ are variables and $e$ an expression.
A \emph{substitution} or (\emph{environment}) $\sigma$ is a mapping from variables to values that respects
types. Substitutions are denoted as sets of assignments: $\sigma = \{\seqx{x_\i\mapsto t_\i}\}$.
We write $\dom(\sigma)$ ($\range(\sigma)$) to denote the domain (range) of $\sigma$.
Let $\sigma$, $\tau$ be substitutions such that $\dom(\sigma) \cap \dom(\tau) = \varnothing$. Then we
denote the (disjoint) union of $\sigma$ and $\tau$ as $\sigma \dunion \tau$.
We employ a simple cost-sensitive big-step semantics based on eager evaluation, whose rules are given in Figure~\ref{fig:4}.
The judgement $\eval{\sigma}{\ell}{e}{v}$ means that under environment $\sigma$,
expression $e$ is evaluated to value $v$ in exactly $\ell$ steps. Here only rule applications emit (unit) costs.
If we do not take costs into account, we simply write $\eval{\sigma}{}{e}{v}$.

\begin{figure}[t]
\begin{mathpar}
\infer{\eval{\sigma}{0}{\flstc{false}}{\flstc{false}}}{}  
\and
\infer{\eval{\sigma}{0}{\flstc{true}}{\flstc{true}}}{}
\and
\infer{\eval{\sigma}{0}{\flstc{leaf}}{\flstc{leaf}}}{}
\\
\infer{\eval{\sigma}{0}{\text{\flsttree{$x_1$}{$x_2$}{$x_3$}}}{\text{\flsttree{$t$}{$b$}{$u$}}}}%
{x_1\sigma = t
  &x_2\sigma = b
  &x_3\sigma = u}
\and
\infer{\eval{\sigma}{0}{x}{v}}{x\sigma = v}
\and
\infer
  {\eval{\sigma}{0}{x_1 \circ x_2}{b}}
  {\text{$b$ is value of $x_1\sigma \circ x_2\sigma$}}
\and
\infer
  {\eval{\sigma}{\ell+1}{\flst{f }x_1~\ldots~x_k}{v}}
  {f(x_1,\ldots,x_k) = e \in \Program & \eval{\sigma}{\ell}{e}{v}}
\and
\infer
  {\eval{\sigma}{\ell_1 + \ell_2}{\flstk{let }x\flst{ = }e_1\flstk{ in }e_2}{v}}
  {\eval{\sigma}{\ell_1}{e_1}{w}
  &\eval{\sigma[x \mapsto w]}{\ell_2}{e_2}{v}}
\and
\infer
  {\eval{\sigma}{\ell}{\flstk{match }x\flstk{ with} \begin{array}[t]{l}%
    \flst{| } \flstc{leaf} \flst{ -> } e_1 \\
    \flst{| } \text{\flsttree{$x_0$}{$x_1$}{$x_2$}} \flst{ -> } e_2
  \end{array}}{v}}
  {x\sigma = \flstc{leaf} & \eval{\sigma}{\ell}{e_1}{v}}
\and
\infer
  {\eval{\sigma}{\ell}{\flstk{if }x\flstk{ then }e_1\flstk{ else }e_2}{v}}
  {x\sigma = \flstk{false} & \eval{\sigma}{\ell}{e_2}{v}}
\and
\infer
  {\eval{\sigma}{\ell}{\flstk{match }x\flstk{ with} \begin{array}[t]{l}%
    \flst{| } \flstc{leaf} \flst{ -> }e_1 \\
    \flst{| } \text{\flsttree{$x_0$}{$x_1$}{$x_2$}} \flst{ -> } e_2
  \end{array}}{v}}
  {x\sigma = \text{\flsttree{$t$}{$a$}{$u$}} & \eval{\sigma'}{\ell}{e_2}{v}}
\and
\infer
  {\eval{\sigma}{\ell}{\flstk{if }x\flstk{ then }e_1\flstk{ else }e_2}{v}}
  {x\sigma=\flstk{true} & \eval{\sigma}{\ell}{e_1}{v}}
\end{mathpar}
Here $\sigma[x \mapsto w]$ denotes the update of the environment $\sigma$ such that $\sigma[x \mapsto w](x) = w$ and the value of all other variables remains unchanged. Furthermore, in the second $\match$ rule, we set $\sigma' \defsym \sigma \dunion \{x_0 \mapsto t, x_1 \mapsto a, x_2 \mapsto u\}$.
\caption{Big-Step Semantics}
\label{fig:4}
\end{figure}

\begin{figure}[t]
\centering
\begin{lstlisting}
splay a t = match t with
  | leaf -> leaf
  | (cl, c, cr) ->
   if a = c then (cl, c, cr)
   else if a < c then match cl with
     | leaf -> (cl, c, cr)
     | (bl, b, br) ->
       if a = b then (bl, a, (br, c, cr))
       else if a < b
         then if bl = leaf then (bl, b, (br, c, cr))
           else match splay a bl with
             | (al, a', ar) -> (al, a', (ar, b, (br, c, cr)))
	 else if br = leaf then (bl, b, (br, c, cr))
           else match splay a br with
	     | (al, a', ar)  -> ((bl, b, al), a', (ar, c, cr))
   else match cr with
     | leaf -> (cl, c, cr)
     | (bl, b, br) ->
       if a = b then ((cl, c, bl), a, br)
       else if a < b
         then if bl = leaf then ((cl, c, bl), b, br)
           else match splay a bl with
             | (al, a', ar) -> ((cl, c, al), a', (ar, b, br))
	 else if br = leaf then ((cl, c, bl), b, br)
           else match splay a br with
             | (al, x, xa)  -> (((cl, c, bl), b, al), x, xa)
\end{lstlisting}
\caption{Function \lstinline{splay}.
}
\label{fig:1}
\end{figure}

\emph{Splay trees} have been introduced by Sleator and Tarjan~\cite{ST:1985,Tarjan:1985}
as self-adjusting binary search trees
with strictly increasing in-order traversal. There is no explicit
balancing condition. All operations rely on a tree rotating operation
dubbed \emph{splaying}; \lstinline{splay a t} is performed by rotating element $a$ to the root of tree $t$
while keeping in-order traversal intact. If $a$ is not contained in $t$, then the last element found before
$\leaf$ is rotated to the tree. The complete definition is given in Figure~\ref{fig:1}.
Based on splaying, searching is performed by splaying with the sought element and comparing to the root
of the result. Similarly, the definition of insertion and deletion depends on splaying.
As an example the definition of insertion and delete is given in Figure~\ref{fig:2} and~\ref{fig:3} respectively.
See also~\cite{Nipkow:2015} for full algorithmic, formally verified, descriptions.

All basic operations can be performed in  $\bO(\log n)$ amortised runtime. The logarithmic amortised
complexity is crucially achieved by local rotations of subtrees in the definition of \lstinline{splay}.
Amortised cost analysis of splaying has been provided for example by Sleator and Tarjan~\cite{ST:1985},
Schoenmakers~\cite{Schoenmakers93}, Nipkow~\cite{Nipkow:2015}, Okasaki~\cite{Okasaki:1999}, among
others.
Below, we follow Nipkow's approach, where the actual cost of splaying is measured by counting the
number of calls to $\text{\lstinline{splay}} \colon \BaseShort \times \TreeShort \to \TreeShort$.

\begin{figure}[t]
\begin{lstlisting}
insert a t = if t = leaf then (leaf, a, leaf)
  else match splay a t with
    | (l, a', r) ->
      if a = a' then (l, a, r)
      else if a < a' then (l, a, (leaf, a', r))
	else ((l, a', leaf), a, r)
\end{lstlisting}
\caption{Function \lstinline{insert}.}
\label{fig:2}
\end{figure}

\begin{figure}[t]
\begin{lstlisting}[mathescape]
delete a t = if t = leaf then leaf
  else match splay a t with
    | (l, a', r) ->
      if a = a' then if l = leaf then r
        else match splay_max l with
          | (l', m, r') -> (l', m, r)
	else (l, a', r)

splay_max t = match t with
  | leaf -> leaf
  | (l, b, r) -> match r with
    | leaf -> (l, b, leaf)
    | (rl, c, rr) ->
      if rr = leaf then ((l, b, rl), c, leaf)
        else match splay_max rr with
          | (rrl, x, xa) -> (((l, b, rl), c, rrl), x, xa)
\end{lstlisting}
\caption{Functions \lstinline{delete} and \lstinline{splay_max}.}
\label{fig:3}
\end{figure}

\section{Resource Functions}
\label{ResourceFunctions}

In this section, we detail the basic potential functions employed and clarify the definition
of potentials used.

Only trees are assigned non-zero potential. This is not a severe restriction as potentials for basic datatypes
would only become essential, if the construction of such types would emit actual costs. This is not the case
in our context. Moreover, note that lists can be conceived as trees of particular shape.
The potential $\Phi(t)$ of a tree $t$ is given as a non-negative linear combination of basic functions,
which essentially amount to ``sums of logs'', cf.~Schoenmakers~\cite{Schoenmakers93}.
It suffices to specify the basic functions for the type of trees $\TreeShort$.
As already mentioned in Section~\ref{Primer}, the \emph{rank} $\rk(t)$ of a tree is defined as follows
\begin{align*}
  \rk(\leaf) &\defsym 1\\
  \rk(\tree{t}{a}{u}) &\defsym \rk(t) + \log'(\size{t}) + \log'(\size{u}) + \rk(u)
\tpkt
\end{align*}
We set $\log'(n) \defsym \log_2(\max\{n,1\})$, that is, the (binary) logarithm function is defined for all numbers.
This is merely a technicality, introduced to ease the presentation. In the following, we will denote the modified logarithmic
function, simply as $\log$. Furthermore, recall that $\size{t}$ denotes the number of leaves in tree $t$.
\label{d:log'}
The definition of ``rank'' is inspired by the definition of potential in~\cite{Schoenmakers93,Nipkow:2015},
but subtly changed to suit it to our context.

\goodbreak
\begin{definition}
\label{d:basicpotential}
  The \emph{basic potential functions} of $\Tree$ are
\begin{itemize}
\item $\lambda t. \rk(t)$, or
\item $p_{(a,b)} \defsym \lambda t. \log(a \cdot \size{t} +b )$, where $a,b$ are natural numbers.
\end{itemize}
\end{definition}

The basic functions are denoted as~$\BF$. Note that the
constant function $1$ is representable: $1 = \lambda t. \log(0 \cdot \size{t}+2) = p_{(0, 2)}$.

Following the recipe of the high-level description in Section~\ref{Primer}, potentials
or more generally \emph{resource functions} become definable as linear combinations of basic potential functions.

\begin{definition}
A \emph{resource function} $r \colon \typedomain{\TreeShort} \to \Rplus$ is a non-negative
linear combination of basic potential functions, that is,
\begin{equation*}
  r(t) \defsym \sum_{i \in \N} q_{i} \cdot p_i (t) \tkom
\end{equation*}
where $p_i \in \BF$. The set of resource functions is denoted as $\RF$.
\end{definition}

We employ $\ast$, natural numbers $i$ and pairs of natural numbers $(a,b)_{a,b \in \N}$ as
indices of the employed basic potential functions.
A \emph{resource annotation over $\TreeShort$}, or simply \emph{annotation}, is a
sequence $Q = [q_\ast] \cup [(q_{(a,b)})_{a,b \in \N}]$ with $q_\ast, q_{(a,b)} \in \Qplus$ with all but
finitely many of the coefficients $q_\ast, q_{(a,b)}$ equal to 0.  It represents
a (finite) linear combination of basic potential functions, that is, a resource function.
The empty annotation, that is, the annotation
where all coefficients are set to zero, is denoted as $\varnothing$.

\begin{remark}
We use the convention that the sequence elements of resource annotations
are denoted by the lower-case letter of the annotation,
potentially with corresponding sub- or superscripts.
\end{remark}

\begin{definition}
The \emph{potential} of a tree $t$ with respect to an annotation $Q$,
that is, $Q=[q_\ast] \cup [(q_{(a,b)})_{a,b \in \N}]$, is defined as follows.
\begin{equation*}
  \potential{t}{Q} \defsym q_\ast \cdot \rk(t) + \sum_{a,b \in \N} q_{(a,b)} \cdot p_{(a,b)}(t) \tkom
\end{equation*}
Recall that $p_{(a,b)} = \log(a \cdot \size{t} +b)$ and that $\rk$ is the rank
function, defined above.
\end{definition}

\begin{example}
Let $t$ be a tree, then its potential could be defined as follows:
$\rk(t) + 3 \cdot \log(\size{t}) + 1$. With respect to the above definition this potential
becomes representable by setting $q_\ast \defsym 1, q_{(1,0)} \defsym 3, q_{(0,2)} \defsym 1$.
Thus, $\potential{t}{Q} = \rk(t) + 3 \cdot \log(\size{t}) + 1$.
\qed
\end{example}

We emphasise that the linear combination defined above is not independent. Consider, for
example $\log(2\size{t}+2)=\log(\size{t}+1)+1$.

\emph{Analysis of Products of Trees.}
We now lift the basic potential functions $p_{(a,b)}$ of a single tree to products of trees.
As discussed in Section~\ref{Primer},
we define the potential functions $p_{(\seq{a}[m],b)}$ for a sequence of $m$ trees $\seq{t}[m]$, by setting:
\begin{equation*}
  p_{(\seq{a}[m],b)}(\seq{t}[m]) \defsym \log(\seqx{a_\i \cdot \size{t_\i}}[m][+] + b) \tkom
\end{equation*}
where $\seq{a}[m],b \in \N$.
Equipped with this definition, we generalise annotations to sequences of trees.
An annotation for a sequence of length $m$ is a sequence $Q = [q_1,\dots,q_m] \cup
[(q_{(\seq[m]{a},b)})_{a_i, b \in \N}]$, again vanishing almost everywhere.
Note that an annotation of length $1$ is simply an annotation as defined above, where the coefficient $q_1$ is set
equal to the coefficient $q_\ast$.
Based on this, the potential of a sequence of trees $\seq{t}[m]$ is defined as follows:

\begin{definition}
\label{d:potential}
Let $\seq{t}[m]$ be trees and let
$Q = [\seq{q}[m]] \cup [(q_{(\seq{a}[m],b)})_{a_i, b \in \N}]$
be an annotation of length $m$ as above. We define
\begin{equation*}
\potential{\seq{t}[m]}{Q} \defsym \sum_{i=1}^m q_i \cdot \rk(t_i) + \sum_{\seq{a}[m],b \in \N}
q_{(\seq{a}[m],b)} \cdot p_{(\seq{a}[m],b)}(\seq{t}[m])
  \tkom
\end{equation*}
where $p_{(\seq{a}[m],b)}(\seq{t}[m]) \defsym \log(\seqx{a_\i \cdot \size{t_\i}}[m][+] + b)$ as
defined above.
Note that for an empty sequence of trees, we have $\potential{\epsilon}{Q} \defsym \sum_{b \in \N} q_b \log(b)$.
\end{definition}

Let $t$ be a tree. Note that the rank function $\rk(t)$ amounts to the sum of the logarithms of the
size of subtrees of $t$. In particular if the tree $t$ simplifies to a list of length $n$, then $\rk(t) = (n+1)+\sum_{i=1}^{n} \log(i)$.
Moreover, as $\sum_{i=1}^n \log(i) \in \Theta(n \log n)$, the above defined potential functions are sufficiently rich
to express linear combinations of sub- and super-linear functions.

Let $\sigma$ denote a substitution, let $\Gamma$ denote a typing context
and let $\seqx{\typed{x_\i}{\TreeShort}}[m]$ denote
all tree types in $\Gamma$. A \emph{resource annotation for $\Gamma$} or simply \emph{annotation}
is an annotation for the sequence of trees $\seqx{x_\i\sigma}[m]$.
We define the \emph{potential} of $\potential{\Gamma}{Q}$ with respect to $\sigma$ as
$\spotential{\Gamma}{Q} \defsym \potential{\seqx{x_\i\sigma}[m]}{Q}$.
%

\begin{definition}
An \emph{annotated signature} $\overline{\FS}$
is a mapping from functions $f$ to sets of pairs consisting of the annotation type
for the arguments of $f$, $\annotatedtype{\alphaprod}{Q}$
and the annotation type $\annotatedtype{\beta}{Q'}$ for the result:
\begin{equation*}
  \overline{\FS}(f) \defsym \left\{ \atypdcl{\alphaprod}{Q}{\beta}{Q'} \colon \text{\parbox{50ex}{$f$ takes $n$ arguments of which $m$ are trees, $Q$ is a resource annotation of length $m$ and $Q'$ a resource annotation of length $1$}} \right\} \tpkt
\end{equation*}
Note that $m \leqslant n$ by definition.
\end{definition}

We confuse the signature and the annotated signature and denote
the latter simply as~$\FS$. Instead of $\atypdcl{\alphaprod}{Q}{\beta}{Q'} \in \FS(f)$, we
typically write $\typed{f}{\atypdcl{\alphaprod}{Q}{\beta}{Q'}}$.
As our analysis makes use of a \emph{cost-free semantics} any function symbol is possibly equipped with a \emph{cost-free} signature, independent of~$\FS$. The cost-free signature is denoted as $\FScf$.

\begin{example}
Consider the function \lstinline{splay}: $\BaseShort \times \TreeShort \to \TreeShort$. The induced
annotated signature is given as $\atypdcl{\BaseShort \times \TreeShort}{Q}{\TreeShort}{Q'}$,
where $Q \defsym [q_\ast] \cup [(q_{(a,b)})_{a,b \in \N}]$ and
$Q' \defsym [q'_\ast] \cup [(q'_{(a,b)})_{a,b \in \N}]$.
The logarithmic amortised cost of splaying is then expressible through the following setting:
$q_\ast \defsym 1$, $q_{(1,0)} = 3$, $q_{(0,2)} = 1$,
$q'_\ast \defsym 1$.
All other coefficients are zero.

This amounts to a potential
of the arguments $\rk(t) + 3 \log(\size{t}) + 1$, while for the
result we consider only its rank, that is, the annotation expresses $3 \log(\size{t}) + 1$
as the logarithmic cost of splaying.
The correctness of the induced logarithmic amortised costs for the zig-zig case of splaying
is verified in Section~\ref{Analysis} and is also automatically verified by our prototype.
\qed
\end{example}

Suppose $\potential{\seq{t},u_1,u_2}{Q}$ denotes an annotated sequence of length $n+2$.
Suppose $u_1 = u_2$ and we want to \emph{share} the values $u_i$, that is, the corresponding function arguments
appear multiple times in the body of the function definition. Then we make use of
the operator $\share{Q}$ that adapts the potential suitably. The operator
is also called \emph{sharing operator}.

\begin{lemma}
\label{l:2}
Let $\seq{t},u_1,u_2$ denote a sequence of trees of length $n+2$ with
annotation $Q$. Then there exists a resource annotation $\share{Q}$ such that
$\potential{\seq{t},u_1,u_2}{Q} = \potential{\seq{t},u}{\share{Q}}$, if
$u_1 = u_2 = u$.
\end{lemma}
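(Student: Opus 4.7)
The plan is to construct $\share{Q}$ explicitly from $Q$ by combining the coefficients associated with the last two components, and then verify by direct calculation that the induced potentials coincide once $u_1=u_2$.

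Concretely, write $Q = [q_1,\dots,q_n,q_{n+1},q_{n+2}] \cup [(q_{(a_1,\dots,a_n,a_{n+1},a_{n+2},b)})]$. I define $\share{Q}$, an annotation of length $n+1$, by setting its rank coefficients to
\begin{equation*}
  \share{q}_i \defsym q_i \text{ for } 1 \leqslant i \leqslant n, \qquad \share{q}_{n+1} \defsym q_{n+1} + q_{n+2},
\end{equation*}
and its logarithmic coefficients, for each tuple $(a_1,\dots,a_n,c,b) \in \N^{n+2}$, by
\begin{equation*}
  \share{q}_{(a_1,\dots,a_n,c,b)} \defsym \sum_{\substack{a_{n+1},a_{n+2}\in\N \\ a_{n+1}+a_{n+2}=c}} q_{(a_1,\dots,a_n,a_{n+1},a_{n+2},b)}.
\end{equation*}
Since only finitely many coefficients of $Q$ are nonzero, this sum is a finite sum, and only finitely many $\share{q}_{(a_1,\dots,a_n,c,b)}$ are nonzero, so $\share{Q}$ is a legitimate annotation.

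The verification is then a direct expansion using Definition~\ref{d:potential}. The rank part contributes $q_{n+1}\rk(u_1) + q_{n+2}\rk(u_2) = (q_{n+1}+q_{n+2})\rk(u)$ under the assumption $u_1 = u_2 = u$, matching $\share{q}_{n+1}\rk(u)$. For the logarithmic part, the key identity is
\begin{equation*}
  a_{n+1}\size{u_1} + a_{n+2}\size{u_2} = (a_{n+1}+a_{n+2})\size{u},
\end{equation*}
so each term $q_{(a_1,\dots,a_{n+2},b)} \cdot p_{(a_1,\dots,a_{n+2},b)}(\seq{t},u_1,u_2)$ equals $q_{(a_1,\dots,a_{n+2},b)} \cdot \log\bigl(\sum_{i\leqslant n} a_i\size{t_i} + (a_{n+1}+a_{n+2})\size{u} + b\bigr)$. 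Regrouping these contributions according to the value of $c = a_{n+1}+a_{n+2}$ yields exactly $\sum_{a_1,\dots,a_n,c,b} \share{q}_{(a_1,\dots,a_n,c,b)} \cdot p_{(a_1,\dots,a_n,c,b)}(\seq{t},u)$, which is the logarithmic contribution of $\potential{\seq{t},u}{\share{Q}}$.

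There is no real obstacle here: the argument is a mechanical regrouping of a finite sum, enabled by the linearity of $a\size{\cdot}$ in its coefficient and the fact that the set of basic potential functions was designed to be closed under the identification $a_{n+1}\size{u}+a_{n+2}\size{u} = (a_{n+1}+a_{n+2})\size{u}$. The only care needed is in bookkeeping the indices and observing that the finiteness of the support of $Q$ is preserved, which makes $\share{Q}$ well-defined as a resource annotation.
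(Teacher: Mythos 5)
Your proof is correct and follows essentially the same route as the paper's: combine the two rank coefficients into their sum and regroup the logarithmic coefficients according to the value of $a_{n+1}+a_{n+2}$, using the identity $a_{n+1}\size{u}+a_{n+2}\size{u}=(a_{n+1}+a_{n+2})\size{u}$. You are in fact somewhat more explicit than the paper, which works wlog with $n=0$ and leaves the definition of $\share{Q}$ to be ``read off'' from the computation, whereas you spell out the summation over decompositions $a_{n+1}+a_{n+2}=c$ and note that finiteness of the support is preserved.
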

\begin{proof}
Wlog.\ we assume $n=0$. Thus, let $Q = [q_1,q_2] \cup [(q_{(a_1,a_2,b)})_{a_i \in \N}]$.
By definition
\begin{equation*}
  \potential{u_1,u_2}{Q} = q_1 \cdot \rk(u_1) + q_2 \cdot \rk(u_2) +
  \sum_{a_1,a_2,b \in \N}  q_{(a_1,a_2,b)} \cdot p_{(a_1,a_2,b)}(u_1,u_2)
  \tkom
\end{equation*}
where $p_{(a_1,a_2,b)}(u_1,u_2) = \log(a_1 \cdot \size{u_1} + a_2 \cdot \size{u_2} + b)$.
By assumption $u=u_1=u_2$. Thus, we obtain
\begin{align*}
  \potential{u,u}{Q} & = q_1 \cdot \rk(u) + q_2 \cdot \rk(u) + \sum_{a_1,a_2,b \in \N}  q_{(a_1,a_2,b)} \cdot p_{(a_1,a_2,b)}(u,u)
  \\
  & = (q_1+q_2) \rk(u) + \sum_{a_1+a_2,b \in \N} q_{(a_1+a_2,b)} \cdot p_{(a_1+a_2,b)}(u)
  \\
  &  = \potential{u}{\share{Q}}
    \tkom
\end{align*}
for suitable defined annotation $\share{Q}$, whose definition can be directly read off from the
above constraints.
\end{proof}

We emphasise that the definability of the sharing annotation $\share{Q}$ is based on the fact that
the basic potential functions $p_{(\seq{a}[m],b)}$ have been carefully chosen so that
\begin{equation*}
p_{(a_0,a_1,\seq[2]{a}[m],b)}(x_1,\seq{x}[m]) = p_{(a_0+a_1,\seq[2]{a}[m],b)}(x_1,\seq[2]{x}[m])
  \tkom
\end{equation*}
holds, cf.~Section~\ref{Primer}.

\begin{remark}
  We observe that the proof-theoretic analogue of the sharing operation constitutes in a
  contraction rule, if the type system is conceived as a proof system.
\end{remark}

Let $Q = [q_\ast] \cup [(q_{(a,b)})_{a,b \in \N}]$ be an annotation and let
$K \in \Qplus$. Then we define $Q' \defsym Q + K$ as follows:
$Q' = [q_\ast] \cup [(q'_{(a,b)})_{a,b \in \N}]$,
where $q'_{(0,2)} \defsym q_{(0,2)} + K$ and for all $(a,b) \not= (0,2)$
$q_{(a,b)}' \defsym q_{(a,b)}$. By definition the annotation coefficient $q_{(0,2)}$ is
the coefficient of the basic potential function $p_{(0,2)}(t) = \log(0\size{t} + 2) = 1$, so
the annotation $Q+K$, adds cost $K$ to the potential induced by $Q$.
Further, we define the multiplication of an annotation $Q$ by a constant $K$, denoted as
$K \cdot Q$ pointwise. Moreover, let $P = [p_\ast] \cup [(p_{(a,b)})_{a,b \in \N}]$ be another annotation. Then
the addition $P+Q$ of annotations $P, Q$ is similarly defined pointwise.

\section{Logarithmic Amortised Resource Analysis}
\label{Typesystem}

\begin{figure}[h]
\begin{mathpar}
\infer[\ruleleaf]{
  \tjudge{\varnothing}{Q+k}{\flstc{leaf}}{\TreeShort}{Q'}
}{%
  \forall c \geqslant 2 \ q_{(c)} = \sum_{a+b=c} q'_{(a,b)}
  &
  k = q'_\ast
}
\and
\infer[\ruleshift]{
  \tjudge{\Gamma}{Q+K}{e}{\alpha}{Q'+K}
}{%
  \tjudge{\Gamma}{Q}{e}{\alpha}{Q'}
  &
  K \geqslant 0
}
\and
\infer[\rulewvar]{
  \tjudge{\Gamma, \typed{x}{\alpha}}{Q}{e}{\beta}{Q'}
}{%
  \tjudge{\Gamma}{R}{e}{\beta}{Q'}
  &
  r_i = q_i
  &
  r_{(\veca,b)} = q_{(\veca,0,b)}
}
\and
\infer[\rulenode]{
  \tjudge{\typed{x_1}{\TreeShort},\typed{x_2}{\BaseShort},\typed{x_3}{\TreeShort}}{Q}{\flsttree{x_1}{x_2}{x_3}}{\TreeShort}{Q'}
}{%
  q_1 = q_2 = q'_\ast
  &
  q_{(1,0,0)} = q_{(0,1,0)} = q'_\ast
  &
  q_{(a,a,c)} = q'_{(a,c)}
}
\and
\infer[\rulecmp]{
  \tjudge{\typed{x_1}{\alpha}, \typed{x_2}{\alpha}}{Q}{x_1 \circ x_2}{\BoolShort}{Q}
}{%
  \circ \in \{ <, >, = \}
}
\and
\infer[\ruleite]{
  \tjudge{\Gamma, \typed{x}{\BoolShort}}{Q}{\flstk{if }x\flstk{ then }e_1\flstk{ else }e_2}{\alpha}{Q'}
}{%
  \tjudge{\Gamma}{Q}{e_1}{\alpha}{Q'}
  &
  \tjudge{\Gamma}{Q}{e_2}{\alpha}{Q'}
}
\and
\infer[\rulematch]{
    \tjudge{\Gamma, \typed{x}{\TreeShort}}{Q}{\flstk{match }x\flstk{ with }
    \flst{| }\flstc{leaf}\flst{ -> }e_1
    \flst{| }\flsttree{x_1}{x_2}{x_3}\flst{ -> }e_2}{\alpha}{Q'}
}{%
  \begin{minipage}[b]{25ex}
    $r_{(\veca,a,a,b)} = q_{(\veca,a,b)}$\\[1ex]
    $p_{(\veca,c)} = \sum_{a+b=c} q_{(\veca,a,b)}$\\[2ex]
    $\tjudge{\Gamma}{P+q_{m+1}}{e_1}{\alpha}{Q'}$
  \end{minipage}
  &
  \begin{minipage}[b]{35ex}
      $r_{m+1} = r_{m+2} = q_{m+1}$
    \\[1ex]
    $r_{(\vec{0},1,0,0)} = r_{(\vec{0},0,1,0)} = q_{m+1}$
    \\[2ex]
    $\tjudge{\Gamma, \typed{x_1}{\TreeShort}, \typed{x_2}{\BaseShort}, \typed{x_3}{\TreeShort}}{R}{e_2}{\alpha}{Q'}$
  \end{minipage}
  &
  q_i = r_i = p_i
}
\and
\infer[\rulelettreecf]{
  \tjudge{\Gamma, \Delta}{Q}{\flstk{let }x\flst{ = }e_1\flstk{ in }e_2}{\beta}{Q'}
}{%
  \begin{minipage}[b]{82ex}
    \centering
    $p_i = q_i$ \quad 
    $p_{(\veca,c)} = q_{(\veca, \vec{0}, c)}$ \quad
    $r_j = q_{m+j}$ \quad $r_{k+1} = p'_\ast$ \quad
    $r_{(\vec{0},d,e)} = p'_{(d,e)}$ \quad
    $\forall \vec{b} \ne \vec{0} \left( r_{(\vec{b},0,0)} = q_{(\vec{0},\vec{b},0)} \right)$
    \\[1ex]
    $\forall \vec{b} \ne \vec{0}, \vec{a} \ne \vec{0} \lor c \ne 0\
    \left( q_{(\veca,\vecb,c)} = \sum_{(d,e)} p^{(\vecb,d,e)}_{(\veca,c)} \right)$
    \\[1ex]
    $\forall \vec{b} \ne \vec{0}, d \ne 0 \lor e \neq 0 \
    \left(
    r_{(\vec{b},d,e)} = {p'}^{(\vecb,d,e)}_{(d,e)}
    \wedge
    \forall (d',e') \neq (d,e) \ \left(
    {p'}^{(\vecb,d,e)}_{(d',e')} = 0 \right) \wedge {} \right.$\\
    $\left. {} \land 
    \sum_{(a,c)} p^{(\vecb,d,e)}_{(\veca,c)} \ge
    {p'}^{(\vecb,d,e)}_{(d,e)} \land 
    \forall \veca \ne \vec{0} \lor c \neq 0 \
    \left( p^{(\vecb,d,e)}_{(\veca,c)} \neq 0 \rightarrow  {p'}^{(\vecb,d,e)}_{(d,e)} \leqslant p^{(\vecb,d,e)}_{(\veca,c)} \right) \right)$\\[2ex]
    $\tjudge{\Gamma}{P}{e_1}{\TreeShort}{P'}$
    \hfill
    $\forall {\vecb \ne \vec{0},d \ne 0 \lor e \neq 0} \ \left( \tjudgecf{\Gamma}{P^{(\vecb,d,e)}}{e_1}{\TreeShort}{{P'}^{(\vecb,d,e)}} \right)$
    \hfill
    $\tjudge{\Delta, \typed{x}{\TreeShort}}{R}{e_2}{\beta}{Q'}$
  \end{minipage}
}
\and
\infer[\ruleletgen]{
  \tjudge{\Gamma, \Delta}{Q}{\flstk{let }x\flst{ = }e_1\flstk{ in }e_2}{\beta}{Q'}
}{%
  \begin{minipage}[b]{30ex}
    $p_i = q_i$ \quad $p_{(\veca,c)} = q_{(\veca, \vec{0}, c)}$\\[2ex]
     $\tjudge{\Gamma}{P}{e_1}{\alpha}{\varnothing}$
  \end{minipage}  
  &
  \begin{minipage}[b]{25ex}
    $q_{(\vec{0},\vec{b},c)} = r_{(\vec{b},c)}$ \ ($\vec{b} \not= \vec{0}$) \\[2ex]
    $\tjudge{\Delta, \typed{x}{\alpha}}{R}{e_2}{\beta}{Q'}$
  \end{minipage}
  &
  \begin{minipage}[b]{12ex}
    $r_j = q_{m+j}$\\[1ex]
    $\alpha \not= \TreeShort$
  \end{minipage}
}
\and
\infer[\ruleshare]{
  \tjudge{\Gamma, \typed{z}{\alpha}}{\share{Q}}{e[z,z]}{\beta}{Q'}
}{%
  \tjudge{\Gamma, \typed{x}{\alpha}, \typed{y}{\alpha}}{Q}{e[x,y]}{\beta}{Q'}
}
\and
\infer[\rulew]{%
  \tjudge{\Gamma}{Q}{e}{\alpha}{Q'}
}{%
  \tjudge{\Gamma}{P}{e}{\alpha}{P'}
  &
  \potential{\Gamma}{P} \leqslant \potential{\Gamma}{Q}
  &
  \potential{\Gamma}{P'} \geqslant \potential{\Gamma}{Q'}
}
\and
\infer[\rulevar]{
  \tjudge{\typed{x}{\alpha}}{Q}{x}{\alpha}{Q}
}{%
  \text{$x$ a variable}
}
\and
\infer[\ruleapp]{%
  \tjudge{\typed{x_1}{\alpha_1},\dots,\typed{x_n}{\alpha_n}}{P +  K \cdot Q}{f(\seq{x})}{\beta}{(P' +  K \cdot Q') - 1}
}{%
  \atypdcl{\alpha_1 \times \cdots \times \alpha_n}{P}{\beta}{P'} \in \FS(f)
  &
  \atypdcl{\alpha_1 \times \cdots \times \alpha_n}{Q}{\beta}{Q'} \in \FScf(f)
  &
  K \in \Qplus
}
\end{mathpar}
To ease notation, we set $\veca \defsym \seq[1]{a}[m]$, $\vecb \defsym \seq[1]{b}[k]$ for vectors of indices $a_i, b_j \in \N$. Further,
$i \in \{\upto{m}\}$, $j \in \{\upto{k}\}$ and $a,b,c,d,e \in \N$.
Sequence elements of annotations, which are not constrained are set to zero.
\caption{Type System for Logarithmic Amortised Resource Analysis}
\label{fig:5}
\end{figure}

In this section, we present the central contribution of this work. We delineate
a novel type-and-effect system incorporating a potential-based amortised resource analysis
capable of expressing \emph{logarithmic} amortised costs. Soundness of the approach
is established in Theorem~\ref{t:1}.

Our potential-based amortised resource analysis is couched in a type system,
given in Figure~\ref{fig:5}. If the type judgement
$\tjudge{\Gamma}{Q}{e}{\alpha}{Q'}$ is derivable, then the worst-case cost of evaluating the
expression $e$ is bound from above by the difference between the potential $\spotential{\Gamma}{Q}$
before the execution and the potential $\potential{v}{Q'}$ of the value $v$ obtained through the evaluation of
the expression $e$.
The typing system makes use of a \emph{cost-free}
semantics, which does not attribute any costs to the calculation.
The cost-free typing judgement is denoted as $\tjudgecf{\Gamma}{Q}{e}{\alpha}{Q'}$
and based on a cost-free variant of the application rule, denoted as \ruleappcf.
The rule~\ruleappcf\ is defined as the rule~\ruleapp, however, no costs are accounted for.
Wrt.\ the cost-free semantics, the \emph{empty signature}, denoted as
$\atypdcl{\alphaprod}{\varnothing}{\beta}{\varnothing}$, is always admissible.

\begin{remark}
Note, that if $\atypdcl{\alphaprod}{P}{\beta}{P'}$ and $\atypdcl{\alphaprod}{Q}{\beta}{Q'}$ are
both cost-free signatures for a function $f$, then any linear combination is admissable as
cost-free signature of $f$. Ie.\ we can assume $\atypdcl{\alphaprod}{K \cdot P + L \cdot Q}{\beta}{K \cdot P' + L \cdot Q'}
\in \FScf(f)$, where $K,L \in \Qplus$.  
\end{remark}

\begin{remark}
Principally the type system can be parameterised
in the resource metric (see~e.g.\cite{HAH:2012b}). In this paper, we focus on amortised and worst-case runtime complexity,
symbolically measured through the number of function applications.
 It is straightforward to generalise this type system to other monotone cost models. Wrt.\ non-monotone costs, like eg.\ heap usage, we expect the type system can also be readily be adapted, but this is outside the scope of the paper.
\end{remark}

We consider the typing rules in turn; recall the convention that sequence elements of annotations are denoted by the lower-case letter of the annotation.
Further, note that sequence elements which do not occur in any constraint are set to zero.
The variable rule \rulevar\ types a variable of unspecified
type $\alpha$. As no actual costs are required the annotation is unchanged.
Similarly no resources are lost through the use of control operators. Hence the definition
of the rules \rulecmp\ and \ruleite\ is straightforward.

As exemplary constructor rules, we have rule \ruleleaf\ for the
empty tree and rule \rulenode\ for the node constructor. Both rules
define suitable constraints on the resource annotations to guarantee that the potential of the
values is correctly represented.

The application rule \ruleapp\ represents the application of a rule given in~$\Program$.
Each application emits actual cost $1$, which
is indicated in the addition of $1$ to the annotation $Q$.
In its simplest form, that is, for the factor $K=1$, the rule allows
to directly read off the required annotations for the typing context

In the pattern matching rule \rulematch\ the potential freed through the
destruction of the tree construction is added to the annotation $R$, which
is used in the right premise of the rule. Note that the length of
the annotation $R$ is $m+2$, where $m$ equals the number of tree types in the type context $\Gamma$.

The constraints expressed in the typing rules~\rulelettreecf\ and~\ruleletgen,
guarantee that the potential provided
through annotation $Q$ is distributed among the call to $e_1$ and $e_2$, that is, this rule takes care of function composition.
The numbers $m$, $k$, respectively, denote the number of tree types in $\Gamma$, $\Delta$.
Due to the sharing rule---discussed in a moment---we can assume wlog.\ that each variable in $e_1$ and $e_2$ occurs at most once.

First, consider the rule~\ruleletgen, that is, the expression $e_1$ evaluates to a value $w$ of arbitrary type $\alpha \not= \Tree$. In
this case the resulting value $w$ cannot carry any potential. This is indicated through the empty annotation $\varnothing$ in the typing
judgement $\tjudge{\Gamma}{P}{e_1}{\alpha}{\varnothing}$. Similarly, in the judgement
$\tjudge{\Delta, \typed{x}{\alpha}}{R}{e_2}{\beta}{Q'}$ for the expression $e_2$, all available potential prior to the
execution of $e_2$ stems from the potential embodied in the type context $\Delta$ wrt.\ annotation $Q$. This is enforced by the corresponding
constraints.
Suppose for $\veca \not= \vec{0}$ and $\vecb \not= \vec{0}$, $q_{(\veca,\vecb,c)}$ would be non-zero. Then the corresponding shared potential between
the contexts $\Gamma$ and $\Delta$ wrt.\ $Q$ is discarded by the rule, as there is no possibility this potential is attached to
the result type $\alpha$.

\label{page:letrule}
Second, consider the more involved rule~\rulelettreecf. To explain this rule, we momentarily assume that in $Q$ no potential is shared, that is,
$q_{(\veca,\vecb,c)} = 0$, whenever $\veca\not= \vec{0}, \vecb \not= \vec{0}$. In this sub-case the rule can be simplified as follows:
\begin{equation*}
\infer[\rulelettree]{
  \tjudge{\Gamma, \Delta}{Q}{\flstk{let }x\flst{ = }e_1\flstk{ in }e_2}{\beta}{Q'}
}{%
  \begin{minipage}[b]{20ex}
    $p_i = q_i$\\[1ex]
    $p_{(\veca,c)} = q_{(\veca, \vec{0}, c)}$\\[2ex]
    $\tjudge{\Gamma}{P}{e_1}{\TreeShort}{P'}$
  \end{minipage}
  &
  \begin{minipage}[b]{30ex}
    $r_{(\vec{b},0,c)} = q_{(\vec{0},\vec{b},c)}$ \ ($\vec{b} \not= \vec{0}$) \\[2ex]
    $\tjudge{\Delta, \typed{x}{\TreeShort}}{R}{e_2}{\beta}{Q'}$
  \end{minipage}
  &
  \begin{minipage}[b]{20ex}
    $r_j = q_{m+j}$\\[1ex]
    $r_{k+1} = p'_\ast$\\[1ex]
    $r_{(\vec{0},a,c)} = p'_{(a,c)}$
  \end{minipage}
}
\end{equation*}
Again the potential in $\Gamma, \Delta$ (wrt.\ annotation $Q$) is distributed for the typing of the expressions $e_1$, $e_2$, respectively, which is
governed by the constraints on the annotations.
The simplified rule is obtained, as the assumption that no shared potential exists, makes almost all constraints
vacuous. In particular, the cost-free derivation
$\tjudgecf{\Gamma}{P^{(\vecb,d,e)}}{e_1}{\TreeShort}{{P'}^{(\vecb,d,e)}}$
is not required.

Finally, consider the most involved sub-case, where shared potentials are possible. Contrary to the simplified rules discussed above, such shared potential
cannot be split between the type contexts $\Gamma$ and $\Delta$, respectively. Thus, the full rule necessarily employs the \emph{cost-free semantics}.
Consequently, the premise $\tjudgecf{\Gamma}{P^{(\vecb,d,e)}}{e_1}{\alpha}{{P'}^{(\vecb,d,e)}}$ expresses
that for all non-zero vectors $\vecb$ and arbitrary indices $d$, $e$, the potentials $\potential{\Gamma}{P^{(\vecb,d,e)}}$ suffices
to cover the potential $\potential{\alpha}{{P'}^{(\vecb,d,e)}}$, if no extra costs are emitted (compare~Section~\ref{Primer}). Intuitively this
represents that the values do not increase during the evaluation of $e_1$ to value $w$.

At last, the type system makes use of structural rules, like the \emph{sharing} rule
\ruleshare\ and the weakening rules \rulewvar\ and \rulew. The
\emph{sharing} rule employs the sharing operator, defined in Lemma~\ref{l:2}. Note that
the variables $x,y$ introduced in the assumption of the typing rule are fresh variables, that do not occur
in $\Gamma$. Similarly, the rule \ruleshift\ allows to shift the potential before and after evaluation of the
expression $e$ by a constant $K$.

Note that the weakening rules embody changes in the potential of the type context of expressions considered. This
amounts to the comparison on logarithmic expressions, principally a non-trivial task that cannot be directly
represented as constraints in the type system. Instead, the rule~\rulew\ employs a symbolic potential expressions for these
comparisons, replacing actual values for tree by variables.
Let $\Gamma$ denote a type context containing the type declarations $\typed{x_1}{\TreeShort}, \dots, \typed{x_m}{\TreeShort}$
and let $Q$ be an annotation of length $m$. Then the \emph{symbolic potential}, denoted
as $\potential{\Gamma}{Q}$, is defined as follows.
\begin{equation*}
\potential{\seq{x}[m]}{Q} \defsym \sum_{i=1}^m q_i \cdot \rk(x_i) + \sum_{\seq{a}[m],b \in \N}
q_{(\seq{a}[m],b)} \cdot p_{(\seq{a}[m],b)}(\seq{x}[m])
\tkom
\end{equation*}
where $p_{(\seq{a}[m],b)}(\seq{x}[m]) = \log(\seqx{a_\i \cdot \size{x_\i}}[m][+] + b)$.
In order to actually solve these constraints over symbolic potentials, we have to \emph{linearise} the underlying
comparisons of logarithmic expressions. This is taken up again in Section~\ref{Automation}.

\begin{definition}
\label{d:welltyped}
A program $\Program$ is called \emph{well-typed} if for any rule
$f(\seq{x}[k]) = e \in \Program$ and any annotated signature
$\atypdcl{\seq{\alpha}[k][\times]}{Q}{\beta}{Q'} \in \FS(f)$, we have
$\tjudge{\typed{x_1}{\alpha_1},\dots,\typed{x_k}{\alpha_k}}{Q}{e}{\beta}{Q'}$.
A program $\Program$ is called \emph{cost-free} well-typed, if the
cost-free typing relation is employed.
\end{definition}

Before we state and proof the soundness of the presented type-and-effect system, we establish
the following auxiliary result, employed in the correct assessment of the transfer of potential in the case of
function composition, see Figure~\ref{fig:5}. See also the high-level description provided in Section~\ref{Primer}.

\begin{lemma}
\label{lem:log-inequality}
Assume $\sum_i q_i \log a_i \geqslant q \log b$ for some rational numbers $a_i,b > 0$ and $q_i \geqslant q$.
Then, $\sum_i q_i \log (a_i + c) \geqslant q \log (b + c)$ for all $c \geqslant 1$.
\end{lemma}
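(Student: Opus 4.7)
\medskip
\noindent\textbf{Proof proposal.} The plan is to reduce the claim to the elementary inequality $b^\alpha + c \geqslant (b+c)^\alpha$ for $\alpha \in [0,1]$ and $c \geqslant 1$, by means of a convexity / Jensen argument applied to a carefully chosen auxiliary function. All coefficients $q_i, q$ are assumed non-negative, as is standard in the paper's setting; the degenerate case $q=0$ is immediate because the hypothesis $q_i \geqslant 0$ and $c \geqslant 1$ already give $\sum_i q_i \log(a_i+c) \geqslant 0$.

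The first step is to observe that for every fixed $c \geqslant 0$ the function $\phi(x) := \log(2^x + c)$ is convex in $x \in \mathbb{R}$. A direct computation gives
\begin{equation*}
\phi''(x) \;=\; \frac{c\cdot 2^x \ln 2}{(2^x+c)^2} \;\geqslant\; 0.
\end{equation*}
Setting $Q := \sum_i q_i$, $\lambda_i := q_i/Q$, and $x_i := \log a_i$, Jensen's inequality applied to $\phi$ and the convex combination $\mu := \sum_i \lambda_i x_i$ yields
\begin{equation*}
\sum_i q_i \log(a_i + c) \;=\; Q \sum_i \lambda_i\, \phi(x_i) \;\geqslant\; Q\, \phi(\mu) \;=\; Q \log(2^\mu + c).
\end{equation*}

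The second step uses the hypothesis. Since $Q\mu = \sum_i q_i \log a_i \geqslant q \log b$, monotonicity of exponentiation gives $2^\mu \geqslant b^{q/Q}$, and monotonicity of $\log$ then yields $Q \log(2^\mu + c) \geqslant Q \log(b^{q/Q} + c)$. Writing $\alpha := q/Q$ we have $\alpha \in [0,1]$ because $q_i \geqslant q \geqslant 0$ forces $Q \geqslant q$. It therefore suffices to show $Q \log(b^\alpha + c) \geqslant q \log(b+c)$, which after dividing by $Q$ reduces to the pointwise inequality $b^\alpha + c \geqslant (b+c)^\alpha$.

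Finally, subadditivity of the concave power $x \mapsto x^\alpha$ for $\alpha \in [0,1]$ gives $(b+c)^\alpha \leqslant b^\alpha + c^\alpha$, and $c \geqslant 1$ together with $\alpha \leqslant 1$ yields $c^\alpha \leqslant c$; combining these closes the argument. The hypothesis $c \geqslant 1$ is used exclusively at this last step, and indeed cannot be dropped (take e.g.\ $q_1 = q = 2$, $a_1 = b$ small). The main conceptual obstacle is spotting the right convex function, namely $\log(2^x + c)$ on the log-scale of the arguments; once Jensen is applied through it, everything else is routine.
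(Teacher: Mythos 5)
Your proof is correct, but it follows a genuinely different route from the paper's. The paper removes the logarithms entirely: after normalising to $q=1$ it rewrites the hypothesis as $\prod_i a_i^{q_i}\geqslant b$, establishes the superadditivity $(x+y)^r\geqslant x^r+y^r$ for $r\geqslant 1$ by a derivative comparison, and then chains $\prod_i(a_i+c)^{q_i}\geqslant\prod_i(a_i^{q_i}+c^{q_i})\geqslant\prod_i a_i^{q_i}+\prod_i c^{q_i}\geqslant b+c$, taking logarithms only at the very end. You instead stay on the logarithmic scale throughout, passing through Jensen's inequality for the convex function $\phi(x)=\log(2^x+c)$ and finishing with subadditivity of $t\mapsto t^\alpha$ for $\alpha\in[0,1]$ together with $c^\alpha\leqslant c$. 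Both arguments consume $c\geqslant 1$ in exactly one, dual, place: the paper needs $\prod_i c^{q_i}\geqslant c$, you need $c^\alpha\leqslant c$. The interesting difference is where the hypothesis $q_i\geqslant q$ enters: the paper needs it for \emph{every} index $i$ (to apply superadditivity with exponent $q_i\geqslant 1$ after normalisation), whereas your argument only ever uses $\sum_i q_i\geqslant q$ (to get $\alpha\leqslant 1$), so you have in fact proved the lemma under a strictly weaker hypothesis; the price is the heavier machinery of convexity and Jensen where the paper gets by with a single elementary calculus fact. One small blemish, outside the proof proper: your parenthetical counterexample for the necessity of $c\geqslant 1$ (a single index with $q_1=q$ and $a_1=b$) reduces the conclusion to $q\log(b+c)\geqslant q\log(b+c)$, which holds trivially for every $c$; a working example needs at least two indices, e.g.\ $q_1=q_2=q=1$, $a_1=a_2=a$, $b=a^2$, with $a$ and $c$ small.
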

\begin{proof}
Wlog.\ we can assume $q=1$ and $q_i \geqslant 1$, as otherwise we simply divide the assumed inequality by $q$.
Further, observe that the assumption $\sum_i q_i \log a_i \geqslant q \log b$ is equivalent to
\begin{equation}
  \label{eq:inequality}
  \prod_i a_i^{q_i} \geqslant b
  \tpkt
\end{equation}
First, we prove that
\begin{equation}
  \label{eq:inequality:ii}
  (x + y)^r \geqslant x^r + y^r \quad r \geqslant 1 \quad x,y \geqslant 0
  \tpkt
\end{equation}
This is proved as follows. Fix some $x \geqslant 0$ and consider $(x + y)^r$ and $x^r + y^r$ as functions in $y$.
It is then sufficient to observe that $(x + y)^r \geqslant x^r + y^r$ for $y = 0$ and that $\sfrac{d}{dy}(x + y)^r \geqslant \sfrac{d}{dy}(x^r + y^r)$ (the derivatives with regard to $y$) for all $y \geqslant 0$.
Indeed, we have $\sfrac{d}{dy}(x + y)^r = r(x + y)^{r-1}$ and $\sfrac{d}{dy}(x^r + y^r) = r y^{r-1}$.
Because of $r \geqslant 1$ and $x \geqslant 0$, we can thus deduce that $\sfrac{d}{dy}(x + y)^r \geqslant \sfrac{d}{dy}(x^r + y^r)$ for all $y \geqslant 0$.

Now we consider some $c \geqslant 1$.
Combining~\eqref{eq:inequality} and~\eqref{eq:inequality:ii}, we get
\begin{equation*}
  \prod_i (a_i+c)^{q_i} \geqslant \prod_i (a_i^{q_i} + c^{q_i}) \geqslant \prod_i a_i^{q_i} +  \prod_i c^{q_i} \geqslant b + c
  \tkom
\end{equation*}
where we have used that $i \ge 1$, and that $q_i \geqslant 1$ and $c \geqslant 1$ imply $\prod_i c^{q_i} \geqslant c$.
By taking the logarithm on both sides of the inequality we obtain the claim.
\end{proof}

Finally, we obtain the following soundness result, which roughly states that if a program $\Program$ terminates, then the difference in potential has paid its execution costs.%
\footnote{A stated, soundness assumes termination of~$\Program$, but our analysis is not restricted to terminating programs. In order to avoid the assumption the soundness theorem would have to be formulated wrt.\ to a partial big-step or a small step semantics, cf.~\cite{HoffmannH10b,MS:2020}. We consider this outside the scope of this work.}

\begin{theorem}[Soundness Theorem]
\label{t:1}
Let $\Program$ be well-typed and let $\sigma$ be a substitution. Suppose $\tjudge{\Gamma}{Q}{e}{\alpha}{Q'}$ and
$\eval{\sigma}{\ell}{e}{v}$. Then $\spotential{\Gamma}{Q} - \potential{v}{Q'} \geqslant \ell$. Further, if $\tjudgecf{\Gamma}{Q}{e}{\alpha}{Q'}$, then $\spotential{\Gamma}{Q} \geqslant \potential{v}{Q'}$.
\end{theorem}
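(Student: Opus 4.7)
\medskip
\noindent\textbf{Proof plan.}
The plan is to prove both claims (the cost-ful and the cost-free versions) simultaneously, by a nested induction: the outer induction is on the height of the big-step derivation $\eval{\sigma}{\ell}{e}{v}$, and the inner induction is on the height of the typing derivation. The outer induction is needed so that the rule~\ruleapp\ can appeal to the well-typedness assumption on the body of the called function $f$ (whose derivation is unrelated to the current one, but whose evaluation is strictly smaller). The inner induction handles the structural rules~\ruleshift, \rulewvar, \rulew\ and \ruleshare, which do not consume a step of evaluation: in each of these cases we apply the inner IH to the strictly smaller typing subderivation, and then combine the resulting inequality with, respectively, arithmetic, the potential bounds in the premises of \rulew, or Lemma~\ref{l:2} for \ruleshare.

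For the syntax-directed cases, I would match the typing rule with the evaluation rule. The base cases \ruleleaf, \rulevar, \rulecmp\ and the analogous rules for Booleans are immediate: $\ell=0$ and the constraints among $Q$ and $Q'$ are designed so that $\spotential{\Gamma}{Q} = \potential{v}{Q'}$ follows directly from the definitions of $\rk$ and $p_{(\vec a,b)}$. For \rulenode, I would unfold the definition of $\rk$ on $\tree{t}{b}{u}$, use $\size{\tree{t}{b}{u}} = \size{t}+\size{u}$, and check that the constraints $q_{(a,a,c)} = q'_{(a,c)}$, $q_{(1,0,0)} = q_{(0,1,0)} = q'_\ast$ force $\potential{t,u}{Q} = \potential{\tree{t}{b}{u}}{Q'}$. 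The cases \ruleite\ and the two \rulematch\ sub-cases reduce to the outer IH on the taken branch; for the node-branch of \rulematch\ the extra term $q_{m+1}\cdot (\rk(t)+\rk(u)+\log\size{t}+\log\size{u})$ released by destructuring is exactly what the constraints between $Q$ and $R$ account for. The case \ruleapp\ consumes one step, so $\ell \geqslant 1$ and the body evaluation has height strictly smaller; the outer IH applied to the signature's body derivation yields the claim, with the $-1$ in the conclusion absorbing the step cost. Linearity of potentials in $K\cdot Q$ takes care of the factor $K$.

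The main obstacle is the rule~\rulelettreecf\ with genuinely shared potential, i.e.\ when $q_{(\veca,\vecb,c)}\neq 0$ for some $\veca\neq\vec 0$ and $\vecb\neq\vec 0$. For this case, I would first evaluate $e_1$ to some tree $w$ in $\ell_1$ steps and $e_2$ to $v$ in $\ell_2$ steps, with $\ell_1+\ell_2=\ell$. By the outer IH on the cost-ful derivation $\tjudge{\Gamma}{P}{e_1}{\TreeShort}{P'}$ we obtain $\potential{\Gamma\sigma}{P} - \potential{w}{P'} \geqslant \ell_1$; by the outer IH on each cost-free derivation $\tjudgecf{\Gamma}{P^{(\vecb,d,e)}}{e_1}{\TreeShort}{{P'}^{(\vecb,d,e)}}$ we obtain $\potential{\Gamma\sigma}{P^{(\vecb,d,e)}} \geqslant \potential{w}{{P'}^{(\vecb,d,e)}}$. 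The last inequality, by the constraints on ${P'}^{(\vecb,d,e)}$ and the shape of $p_{(\veca,c)}$, has the form $\sum p^{(\vecb,d,e)}_{(\veca,c)} \log(\veca\cdot\vec{|\Gamma\sigma|}+c) \geqslant {p'}^{(\vecb,d,e)}_{(d,e)}\cdot \log(d\cdot|w|+e)$, with coefficients satisfying the hypotheses of Lemma~\ref{lem:log-inequality}. Invoking that lemma with $c$ set to $\vecb\cdot\vec{|\Delta\sigma|}$ lifts the inequality to
\begin{equation*}
  \sum p^{(\vecb,d,e)}_{(\veca,c)} \log\!\bigl(\veca\cdot\vec{|\Gamma\sigma|}+\vecb\cdot\vec{|\Delta\sigma|}+c\bigr)
  \geqslant
  {p'}^{(\vecb,d,e)}_{(d,e)}\cdot\log\!\bigl(d\cdot|w|+\vecb\cdot\vec{|\Delta\sigma|}+e\bigr) \tpkt
\end{equation*}

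Summing these over all relevant $(\vecb,d,e)$, and combining with the cost-ful IH for $e_1$ and the IH for $e_2$ applied to $\tjudge{\Delta,\typed{x}{\TreeShort}}{R}{e_2}{\beta}{Q'}$, the arithmetic bookkeeping reshuffles exactly to $\spotential{\Gamma,\Delta}{Q} - \potential{v}{Q'} \geqslant \ell_1+\ell_2$, using the linking constraints $p_i=q_i$, $p_{(\veca,c)}=q_{(\veca,\vec 0,c)}$, $r_j=q_{m+j}$, $r_{k+1}=p'_\ast$, $r_{(\vec 0,d,e)}=p'_{(d,e)}$, $r_{(\vecb,0,0)}=q_{(\vec 0,\vecb,0)}$, and the dominating $q_{(\veca,\vecb,c)}=\sum p^{(\vecb,d,e)}_{(\veca,c)}$. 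The remaining rule \ruleletgen\ is the easy specialisation where $\alpha\neq\TreeShort$ and therefore $w$ carries no potential, so no appeal to Lemma~\ref{lem:log-inequality} is needed. The cost-free soundness claim is obtained along the way by the same case analysis, noting that \ruleappcf\ does not consume the $-1$ term, which is why both statements must be carried together in the induction.
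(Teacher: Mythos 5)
Your proposal matches the paper's proof essentially step for step: the same nested induction (main induction on the evaluation derivation, side induction on the typing derivation for the structural rules), the same treatment of \ruleapp\ via well-typedness and linearity of potentials, and the same handling of \rulelettreecf\ by applying the induction hypothesis to the cost-free derivations for $e_1$ and then lifting the resulting inequalities with Lemma~\ref{lem:log-inequality} instantiated at $c = \vecb\cdot\size{\vec{u}}$ before summing over $(d,e)$. No substantive differences to report.
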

\begin{proof}
The proof embodies the high-level description
given in Section~\ref{Primer}. It proceeds by main induction on $\Pi \colon \eval{\sigma}{\ell}{e}{v}$
and by side induction on $\Xi \colon \tjudge{\Gamma}{Q}{e}{\alpha}{Q'}$, where the latter is employed
in the context of the weakening rules.
We consider only a few cases of interest.
For example, for a case not covered: the variable rule \rulevar\ types a variable of unspecified
type $\alpha$. As no actual costs are required the annotation is unchanged and the
theorem follows trivially.

\emph{Case}. $\Pi$ derives $\eval{\sigma}{0}{\leaf}{\leaf}$. Then $\Xi$
consists of a single application of the rule \ruleleaf:
\begin{equation*}
\infer[\ruleleaf]{\tjudge{\varnothing}{Q+K}{\leaf}{\TreeShort}{Q'}}{%
  \forall c \geqslant 2 \ q_{(c)} = \sum_{a+b=c} q'_{(a,b)}
  &
  K = q'_\ast
  }
\tpkt
\end{equation*}
By assumption $Q= [(q_{(c)})_{c \in \N}]$ is an annotation for the empty sequence of trees.
On the other hand $Q' = [(q'_{(a,b)})_{a,b \in \N}]$ is an annotation of length $1$. Note that
$\rk(\leaf) = 1$ by definition.
Thus, we obtain:
\begin{align*}
\potential{\epsilon}{Q+K} & = K+ \sum_{c} q_{(c)} \cdot \log(c)\\
& =  K+ \sum_{c \ge 2} q_{(c)} \cdot \log(c)\\
& =  q'_\ast + \sum_{a+b\ge 2} q'_{(a,b)} \cdot \log(a+b)\\
& =  q'_\ast + \sum_{a,b} q'_{(a,b)} \cdot \log(a+b)\\
& = q'_\ast \rk(\leaf) + \sum_{a,b} q'_{(a,b)} p_{(a,b)}(\leaf) = \potential{\leaf}{Q'}
\tpkt
\end{align*}

\emph{Case}. Suppose $\Pi$ has the following from:
\begin{equation*}
  \infer{\eval{\sigma}{0}{\text{\flsttree{$x_1$}{$x_2$}{$x_3$}}}{\text{\flsttree{$t$}{$b$}{$u$}}}}%
  {x_1\sigma = t
  &x_2\sigma = b
  &x_3\sigma = u}
  \tpkt
\end{equation*}
Wlog.\ $\Xi$ consists of a single application of the rule \rulenode:
\begin{equation*}
\infer[\rulenode]{\tjudge{\typed{x_1}{\TreeShort},\typed{x_2}{\BaseShort},\typed{x_3}{\TreeShort}}{Q}{\flsttree{x_1}{x_2}{x_3}}{\TreeShort}{Q'}}{%
    q_1 = q_2 = q'_\ast
    &
    q_{(1,0,0)} = q_{(0,1,0)} = q'_\ast
    &
    q_{(a,a,b)} = q'_{(a,b)}
  }
\end{equation*}
By definition, we have $Q = [q_1, q_2] \cup [(q_{(a_1,a_2,b)})_{a_i,b \in \N}]$ and
$Q' = [q'_\ast] \cup [(q'_{(a',b')})_{a',b' \in \N}]$.
We set $\Gamma \defsym \typed{x_1}{\TreeShort}, \typed{x_2}{\BaseShort}, \typed{x_3}{\TreeShort}$ as well as
$x_1\sigma = u$, $x_2\sigma = b$, and $x_3\sigma = v$.
Thus $\spotential{\Gamma}{Q} = \potential{u,v}{Q}$ and we obtain:
\begin{align*}
\potential{u,v}{Q} & = q_1 \cdot \rk(u) + q_2 \cdot \rk(v) +
                    \sum_{a_1,a_2,b} q_{(a_1,a_2,b)} \cdot \log(a_1 \cdot \size{u} + a_2 \cdot \size{v} + b)
\\
& \geqslant q'_\ast \cdot \rk(u) + q'_\ast \cdot \rk(v) +
                    q_{(1,0,0)} \cdot \log(\size{u}) + q_{(0,1,0)} \cdot \log(\size{v}) + {}
\\
& \quad  {} + \sum_{a,b} q_{(a,a,b)} \cdot \log(a \cdot \size{u} + a \cdot \size{v} + b)
\\
& = q'_\ast \cdot (\rk(u)+ \rk(v)+ \log(\size{u})+\log(\size{v})) + {}\\
& \quad {} + \sum_{a, b} q'_{(a,b)} \cdot \log(a \cdot (\size{u}+\size{v}) +b)
\\
& = q'_\ast \cdot \rk(\tree{u}{b}{v}) + \sum_{a, b} q'_{(a,b)} \cdot p_{(a,b)}(\tree{u}{b}{v})
= \potential{\tree{u}{b}{v}}{Q'}
\tpkt
\end{align*}

\emph{Case}. Suppose $\eval{\sigma}{\ell}{e}{v}$ and let the last rule in $\Xi$ be of the following form:
\begin{equation*}
  \infer{
  \tjudge{\Gamma}{Q+K}{e}{\alpha}{Q'+K}
}{%
  \tjudge{\Gamma}{Q}{e}{\alpha}{Q'}
} \tkom
\end{equation*}
where $K \geqslant 0$. By SIH, we have that $\spotential{\Gamma}{Q} - \potential{v}{Q'} \geqslant \ell$, from
which we immediately obtain:
\begin{equation*}
  \spotential{\Gamma}{Q}+K - \potential{v}{Q'}-K = \spotential{\Gamma}{Q} - \potential{v}{Q'} \geqslant \ell \tpkt
\end{equation*}

\emph{Case}.
Consider the first \rulematch\ rule, where $\Pi$ ends as follows:
\begin{equation*}
  \infer{\eval{\sigma}{\ell}{\flstk{match }x\flstk{ with}
    \flst{| } \flstc{leaf} \flst{ -> } e_1 \\
    \flst{| } \text{\flsttree{$x_0$}{$x_1$}{$x_2$}} \flst{ -> } e_2
  }{v}}%
  {x\sigma = \leaf & \eval{\sigma}{\ell}{e_1}{v}}
  \tpkt
\end{equation*}
Wlog.\ we may assume that $\Xi$ ends with the related application
of the \rulematch\ rule:
\begin{equation*}
  \infer{
    \tjudge{\Gamma, \typed{x}{\TreeShort}}{Q}{\flstk{match }x\flstk{ with }
    \flst{| }\flstc{leaf}\flst{ -> }e_1
    \flst{| }\flsttree{x_1}{x_2}{x_3}\flst{ -> }e_2}{\alpha}{Q'}
}{%
  \begin{minipage}[b]{25ex}
    $r_{(\veca,a,a,b)} = q_{(\veca,a,b)}$\\[1ex]
    $p_{(\veca,c)} = \sum_{a+b=c} q_{(\veca,a,b)}$\\[2ex]
    $\tjudge{\Gamma}{P+q_{m+1}}{e_1}{\alpha}{Q'}$
  \end{minipage}
  &
  \begin{minipage}[b]{35ex}
      $r_{m+1} = r_{m+2} = q_{m+1}$
    \\[1ex]
    $r_{(\vec{0},1,0,0)} = r_{(\vec{0},0,1,0)} = q_{m+1}$
    \\[2ex]
    $\tjudge{\Gamma, \typed{x_1}{\TreeShort}, \typed{x_2}{\BaseShort}, \typed{x_3}{\TreeShort}}{R}{e_2}{\alpha}{Q'}$
  \end{minipage}
  &
  q_i = r_i = p_i
}
\tpkt
\end{equation*}
Let $Q$ be an annotation of length $m+1$ while $Q'$ is of length $1$. Thus annotations $P$, $R$ have
lengths $m$, $m+2$, respectively. We write $\vect \defsym t_1, \dots, t_n$ for
the substitution instances of the variables in $\Gamma$. Further $x\sigma = \leaf$,
where the latter equality follows from the assumption on $\Pi$.
By definition and the constraints given in the rule, we obtain:
\begin{align*}
\spotential{\Gamma,\typed{x}{\TreeShort}}{Q} &= \sum_{i} q_i \rk(t_i) + q_{m+1} \rk(\leaf) +
\sum_{\veca, a, c} q_{(\veca, a, c)} \log(\veca\size{\vect} + a\size{\leaf} + c)
\\
&= \sum_{i} q_i \rk(t_i) + q_{m+1}(\rk(\leaf)) + \sum_{\veca, a, c} q_{(\veca, a, c)} \log(\veca \size{\vect} + a + c)
\\
&= \spotential{\Gamma}{P}+q_{m+1}
\tpkt
\end{align*}
Thus $\spotential{\Gamma,\typed{x}{\TreeShort}}{Q} = \spotential{\Gamma}{P+q_{m+1}}$
and the theorem follows by an application of MIH.

Now, consider the second \rulematch\ rule, that is, $\Pi$ ends as follows:
\begin{equation*}
  \infer
  {\eval{\sigma}{\ell}{\flstk{match }x\flstk{ with}
    \flst{| } \flstc{leaf} \flst{ -> }e_1 \\
    \flst{| } \text{\flsttree{$x_0$}{$x_1$}{$x_2$}} \flst{ -> } e_2
  }{v}}%
  {x\sigma = \text{\flsttree{$t$}{$a$}{$u$}} & \eval{\sigma'}{\ell}{e_2}{v}}
  \tpkt
\end{equation*}
As above, we may assume that $\Xi$ ends with the related application
of the \rulematch\ rule. In this subcase, the assumption on $\Pi$ yields
$t \defsym x\sigma = \tree{u}{b}{v}$.
By definition and the constraints given in the rule, we obtain:
\begin{align*}
\spotential{\Gamma,\typed{x}{\TreeShort}}{Q} &= \sum_{i} q_i \rk(t_i) + q_{m+1} \rk(\tree{u}{b}{v}) +
\sum_{\veca, a, c} q_{(\veca, a, c)} \log(\veca\size{\vect} + a\size{\tree{u}{b}{v}} + c)
\\
&=
\sum_{i} q_i \rk(t_i) + q_{m+1}(\rk(u)+\log(\size{u})+\log(\size{v})+\rk(v)) + {}
\\
&\phantom{=}
{} + \sum_{\veca, a, c} q_{(\veca, a, c)} \log(\veca \size{\vect} + a(\size{u}+\size{v}) + c)
\\
&= \spotential{\Gamma, \typed{x_1}{\TreeShort}, \typed{x_2}{\BaseShort}, \typed{x_3}{\TreeShort}}{R}
\tkom
\end{align*}
where we write $\veca\size{\vect}$ as shorthand to
denote componentwise multiplication.

Thus $\spotential{\Gamma,\typed{x}{\TreeShort}}{Q} = \spotential{\Gamma, \typed{x_1}{\TreeShort}, \typed{x_2}{\BaseShort}, \typed{x_3}{\TreeShort}}{R}$
and the theorem follows by an application of MIH.

\emph{Case}.
Consider the \rulelet\ rule, that is, $\Pi$ ends in the following rule:
\begin{equation*}
  \infer
  {\eval{\sigma}{\ell_1 + \ell_2}{\flstk{let }x\flst{ = }e_1\flstk{ in }e_2}{v}}
  {\eval{\sigma}{\ell_1}{e_1}{w}
  &\eval{\sigma[x \mapsto w]}{\ell_2}{e_2}{v}}
\tkom
\end{equation*}
where $\ell = \ell_1+\ell_2$.
First, we consider the sub-case, where the type of $e_1$ is an arbitrary type $\alpha$ but not of type $\Tree$.
Ie.\ we assume that $\Xi$ ends in the following application of the \ruleletgen-rule
\begin{equation*}
  \infer[\ruleletgen]{
  \tjudge{\Gamma, \Delta}{Q}{\flstk{let }x\flst{ = }e_1\flstk{ in }e_2}{\beta}{Q'}
}{%
  \begin{minipage}[b]{30ex}
    $p_i = q_i$ \quad $p_{(\veca,c)} = q_{(\veca, \vec{0}, c)}$\\[2ex]
     $\tjudge{\Gamma}{P}{e_1}{\alpha}{\varnothing}$
  \end{minipage}  
  &
  \begin{minipage}[b]{25ex}
    $q_{(\vec{0},\vec{b},c)} = r_{(\vec{b},c)}$ \ ($\vec{b} \not= \vec{0}$) \\[2ex]
    $\tjudge{\Delta, \typed{x}{\alpha}}{R}{e_2}{\beta}{Q'}$
  \end{minipage}
  &
  \begin{minipage}[b]{10ex}
    $r_j = q_{m+j}$\\[1ex]
    $\alpha \not= \TreeShort$
  \end{minipage}
}
\tpkt
\end{equation*}
Recall that $\veca = a_1,\dots,a_n$, $\vecb = b_1,\dots,b_m$,
$i \in \{1,\dots,m\}$, $j\in\{1,\dots,k\}$ and
$a_i, b_j, a,b,c,d,e$ are natural numbers. Further, the annotations $Q$, $P$, $R$ are of
length $m+k$, $m$ and $k$, respectively, while the corresponding
resulting annotations $Q'$, $P'$ and $R'$, are of length $1$.

By definition and due to the constraints expressed in the typing rule, we have:
\begin{align*}
  \spotential{\Gamma,\Delta}{Q} & =
  \sum_i q_i \rk(t_i) + \sum_j q_{m+j} \rk(u_j) +
  \sum_{\veca,\vecb,c} q_{(\veca,\vecb,c)} \log(\veca\size{\vect} + \vecb\size{\vecu} + c)
  \\
  \spotential{\Gamma}{P} & = \sum_i q_i \rk(t_i) + \sum_{\veca,c} q_{(\veca, \vec{0}, c)}
\log (\veca \size{\vect} + c)
  \\
  \potential{w}{\varnothing} &= 0
  \\
  \spotential{\Delta,\typed{x}{\Tree}}{R} & = \sum_{j} q_{m+j} \rk(u_j) + r_{k+1} \rk(w) +
\sum_{\vecb,a,c} q_{(\vec{0},\vecb,c)} \log(\vecb\size{\vecu}+c)
  \tkom
\end{align*}
where we set $\vect \defsym \seq[1]{t}[m]$ and $\vecu \defsym \seq[1]{u}[k]$, denoting
the substitution instances of the variables in $\Gamma$, $\Delta$, respectively.
Thus, we obtain
\begin{equation*}
  \spotential{\Gamma,\Delta}{Q} \geqslant \spotential{\Gamma}{P} + \spotential{\Delta,\typed{x}{\alpha}}{R}
  \tpkt
\end{equation*}
By main induction hypothesis, we conclude that
$\spotential{\Gamma}{P} - \potential{w}{\varnothing} \geqslant \ell_1$ and
$\spotential{\Delta,\typed{x}{\alpha}}{R} - \potential{v}{Q'} \geqslant \ell_2$, from which the sub-case follows.

Second, we consider the more involved sub-case, where $e_1$ is of $\Tree$ type. Thus, wlog.~$\Xi$ ends
in the following application of the \rulelettreecf-rule.
\begin{equation*}
\infer[\rulelettreecf]{
  \tjudge{\Gamma, \Delta}{Q}{\flstk{let }x\flst{ = }e_1\flstk{ in }e_2}{\beta}{Q'}
}{%
  \begin{minipage}[b]{82ex}
    \centering
    $p_i = q_i$ \quad 
    $p_{(\veca,c)} = q_{(\veca, \vec{0}, c)}$ \quad
    $r_j = q_{m+j}$ \quad $r_{k+1} = p'_\ast$ \quad
    $r_{(\vec{0},d,e)} = p'_{(d,e)}$ \quad
    $\forall \vec{b} \ne \vec{0} \left( r_{(\vec{b},0,0)} = q_{(\vec{0},\vec{b},0)} \right)$
    \\[1ex]
    $\forall \vec{b} \ne \vec{0}, \vec{a} \ne \vec{0} \lor c \ne 0\
    \left( q_{(\veca,\vecb,c)} = \sum_{(d,e)} p^{(\vecb,d,e)}_{(\veca,c)} \right)$
    \\[1ex]
    $\forall \vec{b} \ne \vec{0}, d \ne 0 \lor e \neq 0 \
    \left(
    r_{(\vec{b},d,e)} = {p'}^{(\vecb,d,e)}_{(d,e)}
    \wedge
    \forall (d',e') \neq (d,e) \ \left(
    {p'}^{(\vecb,d,e)}_{(d',e')} = 0 \right) \wedge {} \right.$\\
    $\left. {} \land 
    \sum_{(a,c)} p^{(\vecb,d,e)}_{(\veca,c)} \ge
    {p'}^{(\vecb,d,e)}_{(d,e)} \land 
    \forall \veca \ne \vec{0} \lor c \neq 0 \
    \left( p^{(\vecb,d,e)}_{(\veca,c)} \neq 0 \rightarrow  {p'}^{(\vecb,d,e)}_{(d,e)} \leqslant p^{(\vecb,d,e)}_{(\veca,c)} \right) \right)$\\[2ex]
    $\tjudge{\Gamma}{P}{e_1}{\TreeShort}{P'}$
    \hfill
    $\forall {\vecb \ne \vec{0},d \ne 0 \lor e \neq 0} \ \left( \tjudgecf{\Gamma}{P^{(\vecb,d,e)}}{e_1}{\TreeShort}{{P'}^{(\vecb,d,e)}} \right)$
    \hfill
    $\tjudge{\Delta, \typed{x}{\TreeShort}}{R}{e_2}{\beta}{Q'}$
  \end{minipage}
}
\tkom
\end{equation*}
where the annotations $Q$, $P$, $R$, $Q'$, $P'$ and the
sequences $\veca$, $\vecb$ are as above.
Further, for each sequence $\vecb \not= \vec{0}$, $P^{(\vecb,u,v)}$ and ${P'}^{(\vecb,u,v)}$ denote annotations of length $m$.
By definition and due to the constraints expressed in the typing rule, we have for
all $\vecb \not= \vec{0}$:
\begin{align*}
  \spotential{\Gamma,\Delta}{Q} & =
  \sum_i q_i \rk(t_i) + \sum_j q_j \rk(u_j) +
  \sum_{\vec{a} \ne \vec{0} \lor \vec{b} \ne \vec{0} \lor c \ne 0} q_{(\veca,\vecb,c)} \log(\veca\size{\vect} + \vecb\size{\vecu} + c)
  \\
  \spotential{\Gamma}{P} & = \sum_i q_i \rk(t_i) + \sum_{\vec{a} \ne \vec{0} \lor c \ne 0} q_{(\veca, \vec{0}, c)} \log (\veca \size{\vect} + c) - K
  \\
  \potential{w}{P'} &= r_{k+1} \rk(w) + \sum_{a \ne 0 \lor c \neq 0} r_{(\vec{0},a,c)} \log (a\size{w} +c)
  \\
  \spotential{\Gamma}{P^{(\vecb,d,e)}} &= \sum_{\vec{a} \ne \vec{0} \lor c \ne 0} p^{(\vecb,d,e)}_{(\veca,c)} \log (\veca \size{\vect} + c)
  \\
  \potential{w}{{P'}^{(\vecb,d,e)}} &= {p'}^{(\vecb,d,e)}_{(d,e)} \log (d\size{w} + e)
  \\
  \spotential{\Delta,\typed{x}{\TreeShort}}{R} & = \sum_{j} q_j \rk(u_j) + r_{k+1} \rk(w) +
\sum_{\vec{a} \ne \vec{0} \lor d \ne 0 \lor e \ne 0} r_{(\vecb,d,e)} \log(\vecb\size{\vec{u}}+d\size{w}+e) + K
  \tkom
\end{align*}
where we set $\vect \defsym \seq[1]{t}[m]$ and $\vec{u} \defsym \seq[1]{u}[k]$, denoting the substitution instances of the variables in $\Gamma$, $\Delta$, respectively.

By main induction hypothesis, we conclude that $\spotential{\Gamma}{P} - \potential{w}{P'} \geqslant \ell_1$.
Further, for all $\vecb \not= \vec{0},d \ne 0 \lor e \neq 0$, we have, due to the cost-free typing constraints $\spotential{\Gamma}{P^{(\vecb,d,e)}} \geqslant \potential{w}{{P'}^{(\vecb,d,e)}}$.
The latter yields more succinctly (for all $\vecb \not= \vec{0},d \ne 0 \lor e \neq 0$) that
\begin{equation}
  \label{eq:soundness-alt}
  \sum_{\veca,c} p^{(\vecb,d,e)}_{(\veca,c)} \log (\veca \size{\vect} + c) \geqslant {p'}^{(\vecb,d,e)}_{(d,e)} \log (d\size{w} + e)
  \tpkt
\end{equation}
A third application of MIH yields that
$\spotential{\Delta,\typed{x}{\TreeShort}}{R} - \potential{v}{Q'} \geqslant \ell_2$.
Due to the conditions
$\sum_{(a,c)} p^{(\vecb,d,e)}_{(\veca,c)} \ge {p'}^{(\vecb,d,e)}_{(d,e)}$,
for all $(d',e') \neq (d,e)$, ${p'}^{(\vecb,d,e)}_{(d',e')} = 0$
and for all $\veca$, $c$
$\left( p^{(\vecb,d,e)}_{(\veca,c)} \neq 0 \rightarrow  {p'}^{(\vecb,d,e)}_{(d,e)} \leqslant p^{(\vecb,d,e)}_{(\veca,c)} \right)$,
we can apply Lemma~\ref{lem:log-inequality} to Equation~\eqref{eq:soundness-alt} and obtain
\begin{equation*}
  \sum_{\vec{a} \ne \vec{0} \lor c \ne 0} p^{(\vecb,d,e)}_{(\veca,c)} \log (\veca \size{\vect} + \vecb \size{\vecu} + c) \geqslant
  {p'}^{(\vecb,d,e)}_{(d,e)} \log (\vecb \size{\vecu} + d\size{w} + e)
  \tpkt
\end{equation*}
Due to the condition $\left( q_{(\veca,\vecb,c)} = \sum_{(d,e)} p^{(\vecb,d,e)}_{(\veca,c)} \right)$
for all $\vec{b} \ne \vec{0}, \vec{a} \ne \vec{0} \lor c \ne 0$, 
we can sum those equations for all $d \ne 0 \lor e \neq 0$ and obtain (for all $\vecb \not= \vec{0},d \ne 0 \lor e \neq 0$) that
\begin{equation*}
  \sum_{\vec{a} \ne \vec{0} \lor c \ne 0} q_{(\veca,\vecb,c)} \log (\veca \size{\vect} + \vecb \size{\vecu} + c) \geqslant
  \sum_{d \ne 0 \lor e \neq 0} r_{(\vec{b},d,e)} \log (\vecb \size{\vecu} + d\size{w} + e)
  \tpkt
\end{equation*}

We can combine the above fact to conclude the case.

\emph{Case.} Finally, we consider the application rules~\ruleapp\ and~\ruleappcf. As the cost-free variant only
differs in sofar that costs are not counted by~\ruleappcf, it suffices to consider the rule~\ruleapp.
Let $f(\seq{x}[k]) = e \in \Program$ and let $\Pi$ derives $\eval{\sigma}{\ell+1}{f(\seq{x}[k])}{v}$.
We consider the costed typing
$\tjudgecf{\typed{x_1}{\alpha_1},\dots,\typed{x_k}{\alpha_k}}{(P + K \cdot Q)+1}{f(\seq{x}[k])}{\alpha}{P' -1 + K \cdot Q'}$,
where $K \in \Qplus$. Set $\Gamma \defsym \typed{x_1}{\alpha_1},\dots,\typed{x_k}{\alpha_k}$.
As $\Program$ is well-typed, we have
\begin{equation*}
  \tjudge{\Gamma}{P}{e}{\beta}{P'} \qquad \text{and} \qquad \tjudgecf{\Gamma}{Q}{e}{\beta}{Q'}\tpkt
\end{equation*}
We can apply MIH wrt.\ the evaluation $\Pi'$ of $\eval{\sigma}{\ell}{e}{v}$
to conclude $\spotential{\Gamma}{P} - \potential{v}{P'} \geqslant \ell$ as well as
$\spotential{\Gamma}{Q} \geqslant \potential{v}{Q'}$.
By monotonicity of addition and multiplication:
\begin{align*}
  \spotential{\Gamma}{P + K \cdot Q} &= \spotential{\Gamma}{P} + K \cdot \spotential{\Gamma}{Q}\\
  & \geqslant (\potential{v}{P'} + \ell) + K \cdot \potential{v}{Q'} = \potential{v}{P' + K \cdot Q'} + \ell \tpkt
\end{align*}
Thus
\begin{gather*}
  \spotential{\Gamma}{P + K \cdot Q} - \potential{v}{P' - 1 + K \cdot Q'} = {} \\
  {} = (\spotential{\Gamma}{P + K \cdot Q} - \potential{v}{P' + K \cdot Q'}) +1 \geqslant \ell + 1
  \tpkt
\end{gather*}
From this, the case follows, which completes the proof of the soundness theorem.
\end{proof}


\begin{remark}
  We note that the basic resource functions can be generalised
  to additionally represent linear functions in the size of the arguments. The above soundness theorem is
  not affected by this generalisation.
\end{remark}

In the next section, we exemplify the use of the proposed type-and-effect system, cf.~Figure~\ref{fig:5}, on the motivating example.

\section{Analysis}
\label{Analysis}

As promised, we apply in this section the proposed type-and-effect system to obtain an optimal analysis of
the amortised costs of the zig-zig case of \emph{splaying}, once the type annotations are fixed. As a preparatory
step, also to emphasise the need for the cost-free semantics, we precise the informal account of
compositional reasoning given in Section~\ref{Primer}.

\subsection{Let-Normal Form}
\label{Analysis:1}

We consider the expression \tree{al}{a}{\tree{ar}{b}{\tree{br}{c}{cr}}} =: t', which becomes the following expression $e$ in let-normal form
\begin{lstlisting}
  let t''' = (br, c, cr) in (
      let t'' = (ar, b, t''') in ((al, a, t''))
  )
\end{lstlisting}
The expression $e$ is typable with the following derivation, where the expression $e'$ abbreviate \lstinline{let t'' = (ar, b, t''') in ((al, a, t''))}.
(We have ignored expressions of base type to increase readability.) 
\begin{equation*}
  \infer[(\ast)]{\tjudge{\typed{br}{\TreeShort}, \typed{cr}{\TreeShort}, \typed{ar}{\TreeShort}, \typed{al}{\TreeShort}}{Q}{e}{\TreeShort}{Q'}}{%
      \tjudge{\typed{br}{\TreeShort},\typed{cr}{\TreeShort}}{Q_1}{(br,c,cr)}{\TreeShort}{Q'_1}
      &
      \infer{\tjudge{\typed{ar}{\TreeShort},\typed{al}{\TreeShort},\typed{t'''}{\TreeShort}}{Q_2}{e'}{\TreeShort}{Q'}}{%
        \infer{\tjudge{\typed{ar}{\TreeShort}, \typed{t'''}{\TreeShort}}{Q_3}{(ar,b,t''')}{\TreeShort}{Q'_3}}{%
            \begin{minipage}{30ex}
              $q^3_1 = q^3_2 = {q'}^3_\ast$\\
              $q^3_{(1,0,0)} = q^3_{(0,1,0)} = {q'}^3_\ast$\\
              $q^3_{(1,1,0)} = {q'}^3_{(1,0)}$
            \end{minipage}
          }
          &
          \qquad \eqref{eq:subtree}
        }        
    }
\end{equation*}
Here, we employ the derivability of the following type judgement~\eqref{eq:subtree}
by a single application of~\rulenode, wrt.\ the annotation $Q_4$, $Q'$ given below.
\begin{equation}
  \label{eq:subtree}
  \tjudge{\typed{al}{\TreeShort}, \typed{t''}{\TreeShort}}{Q_4}{(al,a,t'')}{\TreeShort}{Q'}
  \tpkt
\end{equation}

It is not difficult to check, that the above derivation indeed proves well-typedness
of the expression $e$ wrt. the below given type annotations.
\begin{align*}
  Q &\colon q_1 = q_2 = q_3 = q_4 = 1; \RED{q_{(1,1,1,0,0)} = 1}; q_{(1,1,0,0,0)} = 1; q_{(0,0,0,0,2)} = 1 \tkom
  \\
    & \quad q_{(1,0,0,0,0)} = q_{(0,1,0,0,0)} = q_{(0,0,1,0,0)} = q_{(0,0,0,1,0)} = 1 \tkom
  \\
  Q' &\colon q'_\ast = 1, q'_{(0,2)} = 1 \tkom
  \\
  Q_1 &\colon q^1_1 = q_1=1; q^1_2 = q_2=1; q^1_{(0,0,2)} = q_{(0,0,0,0,2)} = 1 \tkom
  \\
    & \quad q^1_{(1,1,0)} = q_{(1,1,0,0,0)}; q^1_{(1,0,0)} = q_{(1,0,0,0,0)} = 1;
      q^1_{(0,1,0)} =  q_{(0,1,0,0,0)} = 1 \tkom
  \\
  Q'_1 &\colon {q'}^1_\ast = 1; {q'}^1_{(1,0)} = 1; {q'}^1_{(0,2)} = 1 \tkom
  \\
  P^{(1,0,1,0)} & \colon \RED{{p}^{(1,0,1,0)}_{(1,1,0)} = q_{(1,1,1,0,0)} = 1} \tkom
  \\
  {P'}^{(1,0,1,0)} & \colon \RED{{p'}^{(1,0,1,0)}_{(1,0)} = 1} \tkom
  \\
  Q_2 &\colon q^2_1 = q_3=1; q^2_2 = q_4=1; q^2_3 = {q'}^1_\ast = 1; \tkom
  \\
  & \quad {q}^2_{(1,0,0,0)} = q_{(0,0,1,0,0)}=1; {q}^2_{(0,1,0,0)} = q_{(0,0,0,1,0)}=1;
        \tkom
  \\
  & \quad {q}^2_{(0,0,1,0)} = {q'}^1_{(1,0)} = 1; {q}^2_{(0,0,0,2)} = {q'}^1_{(0,2)}=1;
    \RED{q^2_{(1,0,1,0)} = {p'}^{(1,0,1,0)}_{(1,0)} = 1} \tkom
  \\
  Q_3 &\colon q^3_1 = q^2_1=1; q^3_2 = q^2_3=1; q^3_{(0,0,2)}={q}^2_{(0,0,0,2)}=1
  \\
    & \quad q^3_{(1,0,0)} = {q^2}_{(1,0,0,0)}=1; q^3_{(0,1,0)} = {q^2}_{(0,0,1,0)}=1;
      q^3_{(1,1,0)} = {q^2}_{(1,0,1,0)}=1 \tkom
  \\
  Q_3' &\colon {q'}^3_\ast=1, {q'}^3_{(1,0)}=1, {q'}^3_{(0,2)}=1
  \\
  Q_4  &\colon q^4_1 = q^2_2=1; q^4_{(0,0,2)}={q'}^3_{(0,2)}=1;
         q^4_{(1,0,0)} = {q}^2_{(0,1,0,0)} = 1;
         q^4_{(0,1,0)} = {q'}^3_{(1,0)} = 1
\end{align*}

In the inference marked with $(\ast)$, we employ the (almost trivial) correctness of the
following \emph{cost-free} typing derivation for
$\tjudgecf{\typed{br}{\TreeShort}, \typed{cr}{\TreeShort}}{P^{(1,0,1,0)}}{(br,c,cr)}{\TreeShort}{{P'}^{(1,0,1,0)}}$.
(For instantiation of the rule~\rulelettreecf\, note $\vec{b}=(1,0)$.)
\begin{equation}
  \label{eq:costfree}
  \infer{\tjudgecf{\typed{br}{\TreeShort}, \typed{cr}{\TreeShort}}{P^{(1,0,1,0)}}{(br,c,cr)}{\TreeShort}{{P'}^{(1,0,1,0)}}}{%
    p^{(1,0,1,0)}_{(1,1,0)} = {p'}^{(1,0,1,0)}_{(1,0)}
  }
  \tpkt
\end{equation}
For all $\vec{b} \not= (0)$, $\vec{b} \not= (1)$ and arbitrary $d$, $e$,
we set $P^{(\vec{b},d,e)} = {P'}^{(\vec{b},d,e)} \defsym \varnothing$.
Our prototype fully automatically checks correctness of the above given annotations.

We emphasise that the involved \rulelet-rule, employed in step $(\ast)$ cannot be avoided.
In particular, the additional cost-free derivation~\eqref{eq:costfree} is essential. Observe the annotation marked
in red in the calculation above. Eventually these amount to a shared potential employed in step ($\ast$).
The cost-free semantics allows us to exploit this shared potential, which otherwise would have to be discarded.

To wit, assume momentarily the rule~\rulelet\ would not make use of cost-free reasoning, similar to
the simplified \rulelet-rule, that we have used in the explanations on page~\pageref{page:letrule}. Then the
shared potential represented by the coefficient $q_{(1,1,1,0,0)} \in Q$ is discarded by the rule. However, this potential
is then missing, if we attempt to type the judgement 
\begin{equation*}
  \tjudge{\typed{ar}{\TreeShort},\typed{al}{\TreeShort},\typed{t'''}{\TreeShort}}{R}{e'}{\TreeShort}{Q'}
  \tkom
\end{equation*}
where $R$ is defined as $Q_2$, except that $r_{(1,0,1,0)} = 0$. Thus, this attempt fails.
(Note that the corresponding coefficient of $Q_2$, marked in red, is non-zero.)

\begin{remark}
To some extent this is in contrast to the use of cost-free semantics in
the literature~\cite{Hoffmann:2011,HAH:2012b,HM:2015,HDW:2017,MS:2020}. While cost-free semantics appear as an add-on in these works, essential only if we
want to capture non tail-recursive programs, cost-free semantics is essential in our context---it is already required for the representation of simple values.
\end{remark}

\subsection{Splay Trees}
\label{Analysis:2}

\begin{figure}[h]
\centering
\begin{equation*}
\infer={\tjudge{\typed{a}{\BaseShort},\typed{t}{\TreeShort}}{Q}{\match\ t\
    \with \text{\lstinline{|}} \leaf\  \arrow\ \leaf
    \text{\lstinline{|}} \flsttree{cl}{c}{cr}\ \arrow\ e_1}{\TreeShort}{Q'}}{%
  \infer{\tjudge{\typed{a}{\BaseShort},\typed{cl}{\TreeShort},\typed{c}{\BaseShort},\typed{cr}{\TreeShort}}{Q_1}{\cif\ a=c\
    \cthen\ \flsttree{cl}{c}{cr}\ \celse\ e_2}{\TreeShort}{Q'}}{%
  \infer={%
     \tjudge{\typed{a}{\BaseShort},\typed{b}{\BaseShort},\typed{cl}{\TreeShort},\typed{cr}{\TreeShort}}{Q_1}{\match\ cl\ \with\
     \text{\lstinline{|}} \leaf\ \arrow\ \flsttree{cl}{c}{cr}
     \text{\lstinline{|}} \flsttree{bl}{b}{br}\ \arrow\ e_3}{\TreeShort}{Q'}}{%
     \infer[\rulew]{\tjudge{\Gamma,\typed{cr}{\TreeShort},\typed{bl}{\TreeShort},\typed{br}{\TreeShort}}{Q_2}{e_3}{\TreeShort}{Q'}}{%
     \infer={\tjudge{\Gamma,\typed{cr}{\TreeShort},\typed{bl}{\TreeShort},\typed{br}{\TreeShort}}{Q_3}{e_3}{\TreeShort}{Q'}}{%
       \infer[(\ast)]{\tjudge{\Gamma,\typed{cr}{\TreeShort},\typed{bl}{\TreeShort},\typed{br}{\TreeShort}}{Q_3}{e_4}{\TreeShort}{Q'}}{%
         \infer{\tjudge{\typed{a}{\BaseShort},\typed{bl}{\TreeShort}}{Q}{\text{\lstinline{splay a bl}}}{\TreeShort}{Q'-1}}{%
           \text{\lstinline{splay:}}{ \atypdcl{\TreeShort}{Q}{\TreeShort}{Q'}}}
         &
         \infer={\tjudge{\Delta,\typed{cr}{\TreeShort},\typed{br}{\TreeShort},\typed{x}{\TreeShort}}{Q_4}{\match\ x\ \with\
             \text{\lstinline{|}} \flsttree{al}{a'}{ar}\ \arrow t'}{\TreeShort}{Q'}}{%
            \tjudge{\Delta,\typed{cr}{\TreeShort},\typed{br}{\TreeShort},\typed{al}{\TreeShort},\typed{a'}{\BaseShort},\typed{ar}{\TreeShort}}{Q_5}{t'}{\TreeShort}{Q'}
         }
       }
     }
   }
 }
}
}
\end{equation*}
\caption{Partial Typing Derivation for \lstinline{splay}, focusing on the zig-zig Case.}
\label{fig:6}
\end{figure}

\begin{figure}[h]
\centering
\begin{equation*}
\infer={\tjudge{\typed{a}{\BaseShort},\typed{t}{\TreeShort}}{P}{\match\ t\
    \with \text{\lstinline{|}} \leaf\  \arrow\ \leaf
    \text{\lstinline{|}} \flsttree{cl}{c}{cr}\ \arrow\ e_1}{\TreeShort}{P'}}{%
  \infer{\tjudge{\typed{a}{\BaseShort},\typed{cl}{\TreeShort},\typed{c}{\BaseShort},\typed{cr}{\TreeShort}}{P_1}{\cif\ a=c\
    \cthen\ \flsttree{cl}{c}{cr}\ \celse\ e_2}{\TreeShort}{P'}}{%
  \infer={%
     \tjudge{\typed{a}{\BaseShort},\typed{b}{\BaseShort},\typed{cl}{\TreeShort},\typed{cr}{\TreeShort}}{P_1}{\match\ cl\ \with\
     \text{\lstinline{|}} \leaf\ \arrow\ \flsttree{cl}{c}{cr}
     \text{\lstinline{|}} \flsttree{bl}{b}{br}\ \arrow\ e_3}{\TreeShort}{P'}}{%
      \infer={\tjudge{\Gamma,\typed{cr}{\TreeShort},\typed{bl}{\TreeShort},\typed{br}{\TreeShort}}{P_2}{e_3}{\TreeShort}{P'}}{%
       \infer[(\ast)]{\tjudge{\Gamma,\typed{cr}{\TreeShort},\typed{bl}{\TreeShort},\typed{br}{\TreeShort}}{P_2}{e_4}{\TreeShort}{P'}}{%
         \infer{\tjudge{\typed{a}{\BaseShort},\typed{bl}{\TreeShort}}{\varnothing}{\text{\lstinline{splay a bl}}}{\TreeShort}{\varnothing}}{%
           \text{\lstinline{splay:}}{ \atypdcl{\TreeShort}{\varnothing}{\TreeShort}{\varnothing}}}
         &
         \infer={\tjudge{\Delta,\typed{cr}{\TreeShort},\typed{br}{\TreeShort},\typed{x}{\TreeShort}}{P_3}{\match\ x\ \with\
             \text{\lstinline{|}} \flsttree{al}{a'}{ar}\ \arrow t'}{\TreeShort}{P'}}{%
            \tjudge{\Delta,\typed{cr}{\TreeShort},\typed{br}{\TreeShort},\typed{al}{\TreeShort},\typed{a'}{\BaseShort},\typed{ar}{\TreeShort}}{P_4}{t'}{\TreeShort}{P'}
         }
       }
     }
   }
 }
}
\end{equation*}
\caption{Cost-Free Derivation for \lstinline{splay}, focusing on the zig-zig Case.}
\label{fig:7}
\end{figure}

In this subsection, we exemplify the use of the type system presented in the last section
on the function \lstinline{splay}, cf.~Figure~\ref{fig:1}.
Our amortised analysis of splaying yields that the amortised cost
of \lstinline{splay a t} is bound by $3\log(\size{t})+1$, where the actual
cost counts the number of recursive calls to \lstinline{splay}, cf.~\cite{ST:1985,Schoenmakers93,Nipkow:2015}.
To verify this amortised cost, we derive
\begin{equation}
  \label{eq:splay:1}
  \tjudge{\typed{a}{\Base},\typed{t}{\Tree}}{Q}{e}{\Tree}{Q'} \tkom
\end{equation}
where the expression $e$ is the definition of \lstinline{splay} given in
Figure~\ref{fig:1} and the annotations $Q$ and $Q'$ are as follows:
\begin{align*}
  Q &\colon q_1=1, q_{(1,0)} = 3, q_{(0,2)} = 1 \tkom
  \\
  Q' &\colon q'_\ast =1 \tpkt  
\end{align*}
Remark that the amortised cost of splaying is represented by the coeficients
$q_{(1,0)}$ and $q_{(0,2)}$, expressing in sum $3\log(\size{t})+1$. Note, further
that the coefficient $q_1, q'_\ast$, represent Schoenmakers' potential,
that is, $\rk(t)$ and $\rk(\text{\lstinline{splay a t}})$, respectively. 

We restrict to the zig-zig case: \lstinline{t = } \tree{\tree{bl}{b}{br}}{c}{cr}
together with the recursive call \lstinline{splay a bl} = \flsttree{al}{a'}{ar} and
$a < b < c$. Thus \lstinline{splay a t} yields \tree{al}{a'}{\tree{ar}{b}{\tree{br}{c}{cr}}} =: t'.
Recall that $a$ need not occur in $t$, in this case, the last element $a'$ before
a leaf was found, is rotated to the root.
Our prototype checks correctness of these annotations automatically.

Let $e_1$ denote the subexpression of the definition of splaying, starting in
program line $4$. On the other hand let $e_2$ denote the subexpression defined from
line $5$ to $15$ and let $e_3$ denote the program code within $e_2$ starting
in line $8$. Finally the expression in lines $11$ and $12$,
expands to the following, if we remove part of the syntactic sugar:
\begin{equation*}
  e_4 \defsym \vlet\ x\ \equal\ \text{\lstinline{splay a bl}}\ \vin\
    \match\ x\ \with\ \text{\lstinline{|}} \leaf\ \arrow \leaf\
    \text{\lstinline{|}} \flsttree{al}{a'}{ar}\ \arrow t'
    \tpkt
\end{equation*}

Figure~\ref{fig:6} shows a simplified derivation of~\eqref{eq:splay:1}, where
we  have focused only on a particular path in the derivation tree, suited
to the considered zig-zig case of the definition of splaying.
Omission of premises or rules is indicated by double lines in the inference step.
Again we make crucial use of the cost-free semantics in this derivartion. The corresponding inference step in Figure~\ref{fig:6} 
is marked with ($\ast$) and the employed shared potentials are marked in red.

We abbreviate $\Gamma \defsym \typed{a}{\BaseShort},\typed{b}{\BaseShort},\typed{c}{\BaseShort}$,
$\Delta \defsym \typed{b}{\BaseShort},\typed{c}{\BaseShort}$.
In addition to the original signature of splaying, $\atypdcl{\BaseShort \times \TreeShort}{Q}{\TreeShort}{Q'}$, we use the following annotations, induced by constraints in the type system, cf.~Figure~\ref{fig:5}. As
in Section~\ref{Analysis:1}, we mark annotations that require cost-free derivations in the \rulelettreecf\ rule in red.
\begin{align*}
  Q_1 &\colon q^1_1 = q^1_2 = q_1 = 1, q^1_{(1,1,0)} = q_{(1,0)} = 3,
      q^1_{(1,0,0)} = q^1_{(0,1,0)} = q_1 = 1,
      q^1_{(0,0,2)} = q_{(0,2)} = 1 \tkom
  \\
  Q_2 &\colon q^2_1 = q^2_2 = q^2_3 = 1,
        q^2_{(0,0,0,2)} = 1,
        q^2_{(1,1,1,0)} = q^1_{(1,1,0)} = 3,
        q^2_{(0,1,1,0)} = q^1_{(1,0,0)} = 1,
  \\
      & \quad
        q^2_{(1,0,0,0)} = q^1_{(0,1,0)} = 1,
        q^2_{(0,1,0,0)} = q^2_{(0,0,1,0)} = q^1_1 = 1 \tkom
  \\
  Q_3 &\colon q^3_1=q^3_2=q^3_3 =1,
        q^3_{(0,0,0,2)} = 2, \\
      & \quad
        q^3_{(0,1,0,0)} = 3,  q^3_{(1,0,0,0)} = q^3_{(0,0,1,0)} = q^3_{(1,0,1,0)} = \RED{q^3_{(1,1,1,0)}} =1 \tpkt
\end{align*}

In the step marked with the rule \rulew\ in Figure~\ref{fig:6}, a~\emph{weakening} step is applied, which
amounts to the following inequality:
\begin{equation*}
  \potential{\Gamma,\typed{cr}{\TreeShort},\typed{bl}{\TreeShort},\typed{br}{\TreeShort}}{Q_2} \geqslant
  \potential{\Gamma,\typed{cr}{\TreeShort},\typed{bl}{\TreeShort},\typed{br}{\TreeShort}}{Q_3} \tpkt
\end{equation*}
We emphasise that this step can neither be avoided, nor easily moved to the axioms of the derivation.
To verify the correctness of \emph{weakening} through a direct comparison.
Let $\sigma$ be a substitution. Then, we have
\begin{align*}
  \spotential{\typed{cr}{\TreeShort},\typed{bl}{\TreeShort},\typed{br}{\TreeShort}}{Q_2} & =
  1 + \rk(cr)+\rk(bl)+\rk(br) + 3\log(\size{cr} + \size{bl} + \size{br}) + {} \\
  & \quad {} +    \log(\size{bl}+\size{br}) + \log(\size{cr}) + \log(\size{bl}) + \log(\size{br})
  \\[1ex]
  & =
  1 + \rk(cr)+\rk(bl)+\rk(br) + 2\log(\size{t}) + \log(\size{t}) + {}
  \\
  & \quad {} + \log(\size{bl}+\size{br}) + \log(\size{cr}) + \log(\size{bl}) + \log(\size{br})
  \\[1ex]
  & \geqslant
  1 + \rk(cr)+\rk(bl)+\rk(br) + \log(\size{bl}) + \log(\size{br}+\size{cr}) + 2 + {}
  \\
  & \quad {} + \log(\size{bl}+\size{br}+\size{cr}) + {}
  \\
  & \quad {} + \log(\size{bl}+\size{br}) + \log(\size{cr}) + \log(\size{bl}) +\log(\size{br}) + {}
  \\[1ex]
  & \geqslant \rk(bl) + 1+ 3\log(\size{bl}) + \rk(cr)+\rk(br)+ \log(\size{br}) + {}
  \\
  & \quad {} + \log(\size{cr}) + \log(\size{br}+\size{cr}) + {}
  \\
  & \quad {} + \log(\size{bl}+\size{br}+\size{cr}) + 1
  =  \spotential{\typed{cr}{\TreeShort},\typed{bl}{\TreeShort},\typed{br}{\TreeShort}}{Q_3}
  \tpkt
\end{align*}
Note that we have used Lemma~\ref{l:1} in the third line to conclude
\begin{equation*}
  2 \log(\size{t}) \geqslant \log(\size{bl}) + \log(\size{br}+\size{cr}) + 2 \tkom
\end{equation*}
as we have $\size{t} = \size{\tree{\tree{bl}{b}{br}}{c}{cr}} = \size{bl} + \size{br} + \size{cr}$.
Furthermore, we have only used monotonicity of $\log$ and formal simplifications.

Further, we verify the use of the \rulelettreecf-rule, marked with $(\ast)$ in the proof.
Consider the following annotation $Q_4$:
\begin{align*}
  Q_4 & \colon q^4_1 = q^3_1; q^4_2 = q^3_3; q^4_3 = q'_\ast;
        q^4_{(1,0,0,0)} = q^3_{(1,0,0,0)};
        q^4_{(0,1,0,0)} = q^3_{(0,0,1,0)};
  \\
  & \quad      q^4_{(1,1,0,0)} = q^3_{(1,0,1,0)};
        q^4_{(1,1,1,0)} = {p'}^{(1,1,1,0)}_{(1,0)} = 1
  \\
  P^{(1,1,1,0)} & \colon \RED{p^{(1,1,1,0)}_{(1,0)}} = \RED{q^3_{(1,1,1,0)}} = \RED{1}
  \\
  {P'}^{(1,1,1,0)} & \colon \RED{{p'}^{(1,1,1,0)}_{(1,0)}} = \RED{1}
  \end{align*}
To see that $Q_4$ is consistent with the constraints on resource annotations in the \rulelettreecf-rule, we first note that
\begin{align*}
  Q+1 & \colon q= q^3_2 = 1, q_{(1,0)} = q^3_{(0,1,0,0)} = 3;
        q_{(0,2)} = q^3_{(0,0,0,2)} \tpkt
\end{align*}
Hence the constraints on the annotations for the left typing tree in the \rulelettreecf-rule amount to the following:
\begin{equation*}
  q = q_2^3 = 1 \quad q_{(1,0)} = q^3_{(0,1,0,0)} = 3 \quad q_{(0,2)} = q_{(0,0,0,2)} = 2 \quad
  q'_\ast = q_3^4 = 1 \tkom
\end{equation*}
which are fulfilled. Further, the right typing tree yields the constraints:
\begin{align*}
  & q^4_1 = q^3_1 = 1
    \quad q^4_2 = q^3_3 = 1
    \quad q^4_{(1,0,0,0)} = q^3_{(1,0,0,0)} = 1
    \quad q^4_{(0,1,0,0)} = q^3_{(0,0,1,0)} = 1
  \\
  & q^4_{(1,1,0,0)} = q^3_{(1,0,1,0)} = 1
    \tkom
\end{align*}
which are also fulfilled. Hence, it remains to check the correctness of the constraints
for the actual cost-free derivation. First, note that for the vector $\vecb = (1,1)$, the
cost-free derivation needs to be checked wrt.\ the annotation pair $P^{(1,1,1,0)} = [p^{(1,1,1,0)}_{(1,0)}]$
and ${P'}^{(1,1,1,0)} = [{p'}^{(1,1,1,0)}_{(1,0)}]$. Second, the various
constraints in the rule \rulelettreecf\ simplify to the inequality $p^{(1,1,1,0)}_{(1,0)} \geqslant {p'}^{(1,1,1,0)}_{(1,0)}$, which holds. Third, the actual cost-free type derivation reads as follows:
\begin{equation}
  \label{eq:cf}
  \tjudge{\typed{a}{\BaseShort},\typed{bl}{\TreeShort}}{P^{(1,1,1,0)}}{\text{\lstinline{splay a bl}}}{\TreeShort}{{P'}^{(1,1,1,0)}} \tpkt
\end{equation}
The typing judgement~\eqref{eq:cf} is derivable if the following cost-free signatures are employed for splaying:
\begin{equation*}
  \text{\lstinline{splay}} \colon \atypdclcf{\Tree}{P}{\Tree}{P'} \qquad \atypdclcf{\Tree}{\varnothing}{\Tree}{\varnothing}
  \tkom
\end{equation*}
where $P = [p_{(1,0)}]$, $P' = [p'_{(1,0)}]$, with $p_{(1,0)} = p'_{(1,0)} \defsym 1$. Recall that $\varnothing$ denotes the
empty annotation, where all coefficients are set to zero. By definition, $P = P^{(1,1,1,0)}$ and $P' = {P'}^{(1,1,1,0)}$.
Informally, this cost-free signature is admissible, as the following equality holds:
\begin{equation*}
  \spotential{\typed{a}{\BaseShort}, \typed{bl}{\TreeShort}}{P} = \log(\size{bl}) =
  \log(\size{(al,a',ar)}) = \potential{\typed{\tree{al}{a'}{ar}}{\TreeShort}}{P'}
  \tpkt
\end{equation*}
Recall that we have \lstinline{splay a bl} = \tree{al}{a'}{ar} for the recursive call and that
$\size{bl} = \size{\text{\tree{al}{a'}{ar}}}$.
Formally, the type derivation of~\eqref{eq:cf} proceeds similar to the derivation in Figure~\ref{fig:6} in
conjunction with the analysis in Subsection~\ref{Analysis:1}, see Figure~\ref{fig:7}. As indicated also the cost-free derivation
requires the use of the full version of the rule \rulelettreecf, as marked by $(\ast)$. In particular, the informal argument on the
size of the argument and the result of splaying is built into the type system. We
use the following annotations:
\begin{align*}
  P &\colon p_{(1,0)} = 1 & P' &\colon p'_{(1,0)} = 1 \\
  P_1 & \colon p^1_{(1,1,0)} = p_{(1,0)} = 1 \\
  P_2 & \colon \RED{p^2_{(1,1,1,0)} =  p^1_{(1,1,0)} = 1}\\
  P_3 & \colon p^3_{(1,1,1,0)} = p'_{(1,0)} = 1\\
  P_4 & \colon p^4_{(1,1,1,1,0)} = p^3_{(1,1,1,0)} = 1
\end{align*}

Finally, one further application of the \rulematch-rule, yields the desired
derivation for suitable~$Q_5$. See also the previous subsection. Note that one further application of the \emph{weakening} rule is required here.

\section{Linearisation and Expert Knowledge}
\label{Automation}

In the context of the presented type-and-effect system (see Figure~\ref{fig:5})
an obvious challenge is the requirement to compare potentials symbolically (see Section~\ref{Typesystem}) rather
than compare annotations directly. This is in contrast to results on resource analysis for constant amortised costs, see eg.~\cite{JLHSH09,JHLH10,HAH:2012b,HDW:2017,JVFH:2017}.
Furthermore, the presence of logarithmic basic functions seems to necessitate
the embodiment of nonlinear arithmetic. In particular, as we need to make use of basic laws of the $\log$ functions,
as the following one. A variant of the below fact has already been observed by Okasaki, cf.~\cite{Okasaki:1999}.

\begin{lemma}
\label{l:1}
Let $x, y \geqslant 1$. Then $2 + \log(x) + \log(y) \leqslant 2\log(x+y)$.
\end{lemma}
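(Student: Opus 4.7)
The plan is to reduce the logarithmic inequality to an elementary polynomial inequality by exponentiating, then invoke a well-known square expansion. Since $\log$ is the base-2 logarithm (and since $x, y \geqslant 1$ we are working in the region where $\log'$ coincides with the ordinary logarithm, so the $\max$-clamping in the definition on page \pageref{d:log'} is irrelevant), I would first rewrite the left-hand side as $\log(4xy)$ using $2 = \log(4)$ together with $\log(x)+\log(y) = \log(xy)$, and the right-hand side as $\log((x+y)^2)$ using $2\log(x+y) = \log((x+y)^2)$.

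Since $\log$ is strictly monotone, the claimed inequality is then equivalent to
\begin{equation*}
4xy \;\leqslant\; (x+y)^2.
\end{equation*}
Expanding the right-hand side gives $(x+y)^2 = x^2 + 2xy + y^2$, so the inequality simplifies to $0 \leqslant x^2 - 2xy + y^2 = (x-y)^2$, which clearly holds for all real $x, y$ (in fact it is just the AM–GM inequality $\sqrt{xy} \leqslant (x+y)/2$ squared).

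There is no real obstacle here: the only thing to be a little careful about is the hypothesis $x, y \geqslant 1$. This hypothesis is not actually needed for the algebraic inequality $(x-y)^2 \geqslant 0$, but it is what ensures that $\log x$ and $\log y$ are non-negative and coincide with the modified function $\log'$ fixed in Section~\ref{ResourceFunctions}, so that the rewriting steps above are valid in the convention used throughout the paper. Chaining the equivalences backwards then yields $2 + \log(x) + \log(y) \leqslant 2\log(x+y)$, as required.
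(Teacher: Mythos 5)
Your proof is correct and follows essentially the same route as the paper's: both reduce the claim to $(x+y)^2 \geqslant 4xy$ via $(x-y)^2 \geqslant 0$ and then transfer it through monotonicity and the elementary laws of $\log$. Your additional remark that $x,y \geqslant 1$ ensures the modified $\log'$ coincides with the ordinary logarithm is a sensible point of care that the paper leaves implicit.
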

\begin{proof}
We observe
  \begin{equation*}
    (x+y)^2 -4xy = (x-y)^2 \geqslant 0
    \tpkt
  \end{equation*}
Hence $(x+y)^2 \geqslant 4xy$ and from the monotonicity
of $\log$ we conclude $\log(xy) \leqslant \log(\frac{(x+y)^2}{4})$.
By elementary laws of $\log$ we obtain:
\begin{equation*}
  \log\left(\frac{(x+y)^2}{4}\right) = \log\left( \left(\frac{x+y}{2}\right)^2 \right) =
  2 \log(x+y) - 2
  \tkom
\end{equation*}
from which the lemma follows as $\log(xy) = \log(x)+\log(y)$.
\end{proof}

A refined and efficient approach which targets linear constraints is achievable as follows.
All logarithmic terms, that is, terms of the form $\log(.)$ are replaced by new variables, focusing on finitely many.
For the latter, we exploit the condition that in resource annotations only finitely many coefficients are non-zero.
Consider the following inequality as prototypical example. Validity of the constraint ought to incorporate the monotonicity of $\log$.
\begin{equation}
  \label{eq:1}
  a_1 \log(\size{t}) + a_2 \log(\size{cr}) \geqslant  b_1 \log(\size{t}) + b_2 \log(\size{cr})
  \tkom
\end{equation}
where we assume $t = \tree{cl}{c}{cr}$ for some value $c$ and thus $\size{t} \geqslant \size{cr}$, cf.~Section~\ref{Analysis:2}.
Replacing $\log(\size{t}), \log(\size{cr})$ with new unknowns $x$, $y$, respectively, we represent~\eqref{eq:1} as follows:
\begin{equation}
  \label{eq:2}
  \forall x, y \geqslant 0. \ a_1 x + a_2 y \geqslant  b_1 x + b_2 y
  \tkom
\end{equation}
Here we keep the side-condition $x \geqslant y$ and observe that the unknowns $x$, $y$ can be assumed
to be non-negative, as they represent values of the $\log$ function.
Thus properties like e.g.\ monotonicity of $\log$, as well as properties like Lemma~\ref{l:1} above,
can be expressed as inequalities over the introduced unknowns. E.g., the inequality $x \geqslant y$
above represents the axiom of monotonicity $\log(\size{t}) \geqslant \log(\size{cr})$. All such obtained inequalities
are collected as \emph{expert or prior knowledge}.
This entails that~\eqref{eq:2} is equivalent to the following existential
constraint satisfaction problem:
\begin{equation}
 \label{eq:3}
  \exists c,d. \ a_1 \geqslant b_1 + c \land a_2 \geqslant d \land b_2 \leqslant c+d
  \tpkt
\end{equation}

We seek to systematise the derivation of inequalities such as~(\ref{eq:3}) from expert knowledge.
For that, we assume that the gathered prior knowledge is represented by a system of inequalities as $A\vec{x} \leqslant \vec{b}, \vec{x} \geqslant 0$, where $A$ denotes a matrix with as many rows as we have prior knowledge, $\vec{b}$ a column vector and $\vec{x}$ the column vector of unknowns of suitable length;
$\vec{x} \geqslant 0$ because $\log$ evaluates to non-negative values.

Below we discuss a general method for the derivation of inequalities such as~(\ref{eq:3}) based on the affine form of Farkas' lemma.
First, we state the variant of Farkas' lemma that we use in this article,
cf.~\cite{Schrijver:1999}. Note that $\vec{u}$ and $\vec{f}$ denote column vectors of suitable length.

\begin{lemma}[Farkas' lemma]
\label{l:farkas}
  Suppose  $A\vec{x} \leqslant \vec{b}, \vec{x} \geqslant 0 $ is solvable.
  Then the following assertions are
  equivalent.
  \begin{align}
    \forall \vec{x} \geqslant 0. \ A\vec{x} \leqslant \vec{b} \Rightarrow \vec{u}^T\vec{x} \leqslant \lambda
    \label{eq:farkas:1}
    \\
    \exists \vec{f} \geqslant 0. \  \vec{u}^T \leqslant \vec{f}^T A \land \vec{f}^T \vec{b} \leqslant \lambda
    \label{eq:farkas:2}
  \end{align}
\end{lemma}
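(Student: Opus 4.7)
The plan is to prove the two directions separately. The direction (\ref{eq:farkas:2}) $\Rightarrow$ (\ref{eq:farkas:1}) is routine: given any $\vec{x} \geqslant 0$ with $A\vec{x} \leqslant \vec{b}$ and any witness $\vec{f} \geqslant 0$ for (\ref{eq:farkas:2}), I would simply chain
\[
\vec{u}^T \vec{x} \;\leqslant\; (\vec{f}^T A)\,\vec{x} \;=\; \vec{f}^T(A\vec{x}) \;\leqslant\; \vec{f}^T \vec{b} \;\leqslant\; \lambda,
\]
where the first inequality uses $\vec{u}^T \leqslant \vec{f}^T A$ together with $\vec{x} \geqslant 0$, the second uses $A\vec{x} \leqslant \vec{b}$ together with $\vec{f} \geqslant 0$, and the last is the second conjunct of (\ref{eq:farkas:2}). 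Note that the solvability assumption on $A\vec{x}\leqslant\vec{b},\vec{x}\geqslant 0$ is not needed for this direction.

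For the nontrivial direction (\ref{eq:farkas:1}) $\Rightarrow$ (\ref{eq:farkas:2}) I would invoke linear programming duality. Consider the primal program
\[
\max \vec{u}^T \vec{x} \quad \text{subject to } A\vec{x} \leqslant \vec{b},\ \vec{x} \geqslant 0,
\]
whose dual reads
\[
\min \vec{f}^T \vec{b} \quad \text{subject to } A^T \vec{f} \geqslant \vec{u},\ \vec{f} \geqslant 0.
\]
The solvability assumption ensures the primal is feasible; by hypothesis (\ref{eq:farkas:1}) its objective is bounded above by $\lambda$. Strong LP duality then yields a dual optimum $\vec{f}^\ast \geqslant 0$ with $\vec{f}^{\ast T}\vec{b}$ equal to the primal optimum, hence $\leqslant \lambda$. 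Dual feasibility of $\vec{f}^\ast$ is exactly $\vec{u}^T \leqslant \vec{f}^{\ast T}A$, so $\vec{f}^\ast$ is the witness required by (\ref{eq:farkas:2}).

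The only real obstacle is the appeal to strong duality itself, which is the genuine mathematical content of Farkas' lemma. This is however a classical result for which several standard proofs exist (Fourier--Motzkin elimination, a separating hyperplane argument on the convex cone generated by the rows of $A$ and the coordinate directions, or a direct pivoting/simplex argument); since the authors already cite Schrijver, I would simply invoke it rather than reprove it.
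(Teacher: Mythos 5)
Your proposal is correct and follows essentially the same route as the paper: the easy direction by chaining $\vec{u}^T\vec{x} \leqslant \vec{f}^TA\vec{x} \leqslant \vec{f}^T\vec{b} \leqslant \lambda$ (noting solvability is not needed there), and the hard direction by LP duality applied to $\max\,\vec{u}^T\vec{x}$ over the feasible, bounded primal. The only cosmetic difference is that you use the standard symmetric dual with constraint $A^T\vec{f} \geqslant \vec{u}$, while the paper invokes the asymmetric dual with equality constraint $\vec{y}^TA = \vec{u}^T$; both immediately yield the required witness.
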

\begin{proof}
  It is easy to see that from \eqref{eq:farkas:2}, we obtain \eqref{eq:farkas:1}. Assume~\eqref{eq:farkas:2}. Assume further that
  $A\vec{x} \leqslant \vec{b}$ for some column vector $\vec{x}$. Then we have
  \begin{equation*}
    \vec{u}^T\vec{x} \leqslant \vec{f}^T A \vec{x} \leqslant \vec{f}^T \vec{b} \leqslant \lambda
    \tpkt
  \end{equation*}
  Note that for this direction the assumption that $A\vec{x} \leqslant \vec{b}, \vec{x} \geqslant 0 $  is solvable is not required.

  With respect to the opposite direction, we assume~\eqref{eq:farkas:1}. By assumption, $A\vec{x} \leqslant \vec{b}, \vec{x} \geqslant 0 $
  is solvable. Hence, maximisation of $\vec{u}^T \vec{x}$ under the side condition $A\vec{x} \leqslant \vec{b}, \vec{x} \geqslant 0 $ is feasible. Let
  $w$ denote the maximal value.
  Due to~\eqref{eq:farkas:1}, we have $w \leqslant \lambda$.

  Now, consider the dual asymmetric
  linear program to minimise  $\vec{y}^T \vec{b}$ under side condition $\vec{y}^T A = \vec{u}^T$ and $\vec{y} \geqslant 0$. Due to
  the Dualisation Theorem, the dual problem is also solvable with the same solution
  \begin{equation*}
    \vec{y}^T \vec{b} = \vec{u}^T \vec{x} = w \tpkt
  \end{equation*}
  We define $\vec{f} \defsym \vec{y}$, which attains the optimal value $w$, such that $\vec{f}^T A = \vec{u}^T$ and $\vec{f} \geqslant 0$
  such that $\vec{f}^T \vec{b} = w \leqslant \lambda$. This yields~\eqref{eq:farkas:2}.
\end{proof}

Second, we discuss a method for the derivation of inequalities such as~(\ref{eq:3}) based on Farkas' lemma.
Our goal is to automatically discharge symbolic constraints such as $\potential{\Gamma}{P} \leqslant \potential{\Gamma}{Q}$---%
as well as $\potential{\Gamma}{P'} \geqslant \potential{\Gamma}{Q'}$---as required by the \emph{weakening} rule~\rulew\ (see Section~\ref{Typesystem}).

According to the above discussion we can represent the inequality $\potential{\Gamma}{P} \leqslant \potential{\Gamma}{Q}$ by
\begin{equation*}
  \vec{p}^T \vec{x} + c_p \leqslant \vec{q}^T \vec{x} + c_q
  \tkom
\end{equation*}
where $\vec{x}$ is a finite vector of variables representing the base potential functions,
$\vec{p}$ and $\vec{q}$ are column vectors representing the unknown coefficients of the non-constant potential functions,
and $c_p$ and $c_q$ are the coefficients of the constant potential functions.
We assume the expert knowledge is given by the constraints
$A\vec{x} \leqslant \vec{b}, \vec{x} \geqslant 0$.
We now want to derive conditions for $\vec{p}$, $\vec{q}$, $c_p$, and $c_q$ such that we can guarantee
\begin{equation}
 \label{eq:4}
 \forall \vec{x} \geqslant 0. \ A \vec{x} \leqslant \vec{b} \Rightarrow \vec{p}^T \vec{x} + c_p \leqslant \vec{q}^T \vec{x} + c_q \tpkt
\end{equation}
By Farkas' lemma it is sufficient to find coefficients $\vec{f} \geqslant 0$ such that
\begin{equation}
 \label{eq:5}
  \vec{p}^T \leqslant \vec{f}^T A + \vec{q}^T \land \vec{f}^T \vec{b} + c_p \leqslant c_q \tpkt
\end{equation}
Hence, we can ensure Equation~\eqref{eq:4} by Equation~\eqref{eq:5} using the new unknowns $\vec{f}$.

We illustrate Equation~\eqref{eq:5} on the above example.
We have $A = (-1 \ 1)$, $b = 0$, $\vec{p} = (b_1 \ b_2)^T$, $\vec{q} = (a_1 \ a_2)^T$ as
well as $c_p = c_q = 0$. Then, the inequality $f b + c_p \leqslant c_q$ simplifies to $0 \leqslant 0$ and can in the following be omitted.
With the new unknown $f \geqslant 0$ we have
\begin{equation*}
    (b_1 \ b_2) \leqslant f (-1 \ 1) + (a_1 \ a_2) \tkom
\end{equation*}
which we can rewrite to
\begin{equation*}
  b_1 \leqslant -f + a_1 \land  b_2 \leqslant f + a_2 \tkom
\end{equation*}
easily seen to be equivalent to Equation~\eqref{eq:3}.

Thus, the validity of constraints incorporating the monotonicity of log becomes expressible in a systematic way. Further, the symbolic constraints enforced
by the \emph{weakening} rule can be discharged systematically and become expressible as existential constraint satisfaction problems.
Note that the incorporation of Farkas' lemma in the above form subsumes the well-known practise of coefficient comparison for the synthesis of
polynomial interpretations~\cite{contejean:2005}, ranking functions~\cite{PR:2004} or resource annotations in the context
of constant amortised costs~\cite{HAH:2012b}.

 In the next section, we briefly detail our implementation of the established logarithmic amortised resource analysis, based on the observations in this section.

\section{Implementation}
\label{sec:implementation}

Based on the principal approach, delineated in Section~\ref{Primer}, we have provided a prototype implementation of
the logarithmic amortised resource analysis detailed above.
The prototype is capable of \emph{type checking} a given resource annotation and requires
user interaction to specify the structural inferences sharing and weakening. These can be
applied manually to improve efficiency of type checking. In future work, we will strive for full automation, capable of \emph{type inference}.
In this section, we briefly indicate the corresponding design choices and heuristics used. Further, we present restrictions and future development areas of the prototype developed.

\emph{Template potential functions.}
Our potential-based method employs linear combinations of basic potential functions~$\BF$, cf.~Definition~\ref{d:basicpotential}. In order
to fix the cardinality of the set of resource functions to be considered, we restrict the coefficients of the potential functions
$p_{(\seq{a}[m],b)}$. For the non-constant part, we demand that $a_i \in \{0,1\}$, while the coefficients $b$, representing the constant part are
restricted to $\{0,1,2\}$.
This restriction to a relative small set of basic potential functions, suitable controls the number of constraints generated for each
inference rule in the type-and-effect system.

\emph{Type-and-effect system.}
Following ideas of classical Hindley-Milner type inference, we collect for each node in the abstract syntax tree (AST) of
the given program the constraints given by the corresponding inference rules in the type system (see Figure~\ref{fig:5}).
As a pre-requisite, we restrict ourselves to three type annotations employed for each function symbol.
\begin{inparaenum}[(i)]
\item One indeterminate type annotation representing a function call with costs;
\item one indeterminate cost-free type annotation to represent a zero-cost call; and
\item one fixed cost-free annotation with the empty annotation that doesn't carry any potential.
\end{inparaenum}
These restrictions were sufficient to handle the zig-zig case of splaying. A larger, potentially infinite set of type annotations is conceivable,
as long as well-typedness is respected, cf.~Definition~\ref{d:welltyped}. As noted in the context of the analysis of constant amortised
complexity an enlarged set of type annotations may be even required to handle non-tail recursive programs, cf.~\cite{HAH:2012b,HDW:2017}.
The collected constraints on the type annotations are passed to an SMT solver, in our case the SMT solver~\texttt{z3}%
\footnote{See~\url{https://github.com/Z3Prover/z3/wiki}.}
and solved over the positive rational numbers. Here we can directly encode the equalities and inequalities of the constraints given in the type system. Due to the use of Farkas' lemma (Lemma~\ref{l:farkas}) only linear constraints are generated.

Our implementation is currently only supports type checking, taking user guidance into account
and thus is semi-automated.
While deriving constraints for the AST nodes of the program is straightforward (as there is only one type rule for every syntactic statement of our programming language), we currently require user interaction for the application of the structural rules.

\emph{Structure rules.}
The structural rules can in principle be applied at every AST node of the program under analysis.
However, they introduce additional variables and constraints and for performance reasons it is better to apply them sparingly.
For the \emph{sharing} rule we proceed as follows:
We recall that the sharing rule allows us to assume that the type system is linear.
In particular, we can assume that every variable occurs exactly once in the type context, which is exploited in the definition of the \emph{let} rules.
However, such an eager application of the sharing rule would directly yield to a size explosion in the number of constraints, as the generation of each fresh variables requires the generation of exponentially many annotations.
Hence, we only apply sharing only when strictly necessary.
Similar to the sharing rule~\ruleshare, variable weakening~\rulewvar\ is employed only when required.
In this way the typing context can be kept small.
This in turn reduces the number of constraints generated.

For the \emph{weakening} rule, we employ our novel methods for symbolically comparing logarithmic expressions,
which we discussed in Section~\ref{Automation}.
Because of our use of Farkas' Lemma, weakening introduces new unknown coefficients, which again
may result in a forbiddingly large search space. Thus, weakening steps are particularly costly.
For performance reasons, we need to control the size of the resulting constraint system. Currently,
we rely on the user to specify the number and place of the applications of the weakening rule. This
is achieved through the provision of suitable \emph{tactics} for type checking.
Note that the weakening rule may need to be applied in the middle of a type derivation, see for example the typing derivation for our motivating example in~Figure~\ref{fig:6}.
This contrasts to the literature where the weakening rule can typically be incorporated
into the axioms of the type system and thus dispensed with.
Perhaps a similar approach is possible in the context of logarithmic amortised resource analysis. But
our current understanding does not support this.

\emph{Expert Knowledge.}
In Section~\ref{Automation}, we propose the generation of a suitable matrix $A$ collecting the \emph{expert or prior knowledge} on such inequalities. 
In particular, wrt.\ Lemma~\ref{l:1}, generation of this expert knowledge is straightforward.
The corresponding inequality amounts to a line in the expert knowledge matrix $A$.
Wrt.\ monotonicity we have experimented with a dedicated size analysis based on a simple static analysis of the given AST, as well as exploitation of the type annotations directly.
For the latter, note that the coefficients in the basic potential functions $p_{(\seq{a}[m],b)}$ are reflected in the corresponding type annotations.
Hence comparison of these (unknown) coefficients allows a sufficient size comparison of the data-structures (ie.\ trees) used in the program at hand.

For now, our exploitation of the expert knowlege is restricted to the monotonicity of logarithm together with the simple mathematical fact about logarithms presented in Lemma~\ref{l:1}. To improve the efficiency and effectivity of the methodology, the following additions could be explored:
  \begin{inparaenum}[(i)]
    \item additional mathematical facts on the logarithm function;
    \item improvement of the dedicated size analysis supporting the applicabilty of monotonicty laws;
    \item incorporation of basic static analysis techniques, like the result of a reachability analysis, etc.
  \end{inparaenum}

\section{Related Work}
\label{Related}

To the best of our knowledge the established type-and-effect system for the analysis of logarithmic amortised
complexity is novel and also the semi-automated resource analysis of self-balanced data-structures like splay trees, is
unparalleled in the literature.
However, there is a vast amount of literature on (automated) resource analysis.
Without hope for a completeness, we
briefly mention~\cite{AAGP08,ADFG:2010,conf/lpar/BlancHHK10,conf/pldi/GulwaniZ10,AAGP:2011,conf/sas/Alonso-BlasG12,HBCLMMP:2012,ADM15,avanzini2016tct,Flores-Montoya:2017,Giesl:2017} for an overview of the field.

(Constant) amortised cost analysis has been in particular
pioneered by Martin Hofmann and his collaborators.
Starting with seminal work on the static prediction of heap space
usage~\cite{HofmannJ03,HR:2013}, the approach has been
generalised to (lazy) functional programming~\cite{JLHSH09,JHLH10,HAH11,HAH:2012,HAH:2012b} and rewriting~\cite{HM:2014,HM:2015}.
Automation of amortised resource analysis has also been greatly influenced
by Hofmann, yielding to sophisticated tools for the analysis
of higher-order functional programs~\cite{HoffmannH10a,HoffmannH10b,Hoffmann:2011}, as well as of object-oriented programs~\cite{HR:2013,conf/lpar/BauerJH18}.
We mention here the highly sophisticated analysis behind
the \raml\ prototype developed in~\cite{HS14,HS:2015a,HS:2015b,HDW:2017} and the \raja~tool~\cite{HR:2013}.

We now overview alternatives to conducting amortised cost analysis through the means of a type-and-effect system.
The line of work~\cite{conf/sas/ZulegerGSV11,SZV:2014,SinnZV15,SinnZV17,conf/vmcai/FiedorHRSVZ18} has focused on identifying abstractions resp. abstract program models that can be used for the automated resource analysis of imperative programs.
The goal has been to identify program models that are sufficiently rich to support the inference of precise bounds and sufficiently abstract to allow for a scalable analysis,
employing the size-change abstraction~\cite{conf/sas/ZulegerGSV11}, (lossy) vector-addition systems~\cite{SZV:2014} and difference-constraint systems~\cite{SinnZV15,SinnZV17}.
This work has led to the development of the tool \loopus, which performs amortised analysis for a class of programs that cannot be handled by related tools from the literature.
Interestingly, \loopus\ infers worst-case costs from lexicographic ranking functions using arguments that implicitly achieve an amortised analysis (for details we refer the reader to~\cite{SinnZV17}).
Another line of work has targeted the resource bound analysis of imperative and object-oriented programs through the extraction of recurrence relations from the program under analysis, whose closed-form solutions then allows one to infer upper bounds on resource usage~\cite{AAGP08,AAGP:2011,conf/sas/Alonso-BlasG12,Flores-Montoya:2017}.
Amortised analysis with recurrence relations has been discussed for the tools \costa~\cite{conf/sas/Alonso-BlasG12} and \cofloco~\cite{Flores-Montoya:2017}.
Amortised analysis has also been employed in the resource analysis for rewriting~\cite{MS:2020} and non-strict function programs, in particular, if \emph{lazy evaluation} is conceived, cf.~\cite{JVFH:2017}.

Sublinear bounds are typically not in the focus of these tools, but can be inferred by some tools.
In the recurrence relations based approach to cost analysis~\cite{AAGP08,AAGP:2011} refinements of linear ranking functions are combined with criteria for divide-and-conquer patterns;
this allows their tool~\pubs\ to recognise logarithmic bounds for some problems, but examples such as \emph{mergesort} or \emph{splaying} are beyond the scope of this approach.
Logarithmic and exponential terms are integrated into the synthesis of ranking functions in~\cite{ChatterjeeFG17}, making use of an insightful adaption of Farkas' and Handelman's lemmas.
The approach is able to handle examples such as \emph{mergesort}, but again not suitable
to handle self-balancing data-structures.
A type based approach to cost analysis for an ML-like language is presented in~\cite{WWC17}, which uses the Master Theorem to handle divide-and-conquer-like recurrences.
Very recently, support for the Master Theorem was also integrated for the analysis of rewriting systems~\cite{WM:2020}, extending~\cite{AM:2016} on the modular resource analysis of rewriting to so-called logically constrained rewriting systems~\cite{FKN17}.
The resulting approach also supports the fully automated analysis of \emph{mergesort}.

We also mention the quest for abstract program models whose resource bound analysis problem is decidable, and for which the obtainable resource bounds can be precisely characterised.
We list here the size-change abstraction, whose worst-case complexity has been completely characterised as polynomial (with rational coefficients)~\cite{conf/mfcs/ColcombetDZ14,conf/csr/Zuleger15}, vector-addition systems~\cite{conf/lics/BrazdilCK0VZ18,conf/fossacs/Zuleger20}, for which polynomial complexity can be decided,
and LOOP programs~\cite{conf/fossacs/Ben-AmramH19}, for which multivariate polynomial bounds can be computed.
We are not aware of similar results for programs models that induce logarithmic bounds.

\section{Conclusion}
\label{Conclusion}

We have presented a novel amortised resource analysis based on the potential method. The
method is rendered in a type-and-effect system. Our type system has been designed with the goal of automation.
The novelty of our contribution is that this is the first approach towards an automation of
an \emph{logarithmic} amortised complexity analysis. In particular, we show how the precise logarithmic amortised cost
of \emph{splaying}---a central operation of Sleator and Tarjan's splay trees---can be checked semi-automatically in our system.
As our potential functions are logarithmic, we cannot directly encode the comparison between logarithmic expressions within the theory of linear arithmetic.
This however is vital for eg.\ expressing Schhoenmakers' and Nipkow's (manual) analysis~\cite{Schoenmakers93,Nipkow:2015}
in our type-and-effect system. In order to overcome this algorithmic challenge, we proposed several ideas for the \emph{linearisation} of the induced constraint satisfaction problem.
These efforts can be readily extended by
expanding upon the \emph{expert knowledge} currently employed, eg.\ via incorporation of the results of a static analysis performed in a pre-processing step.
In future work, we aim at extension of the developed prototypes to a fully automated analysis of logarithmic amortised complexity. Here it
may be profitable to expand the class of potential functions to take \emph{linear} potential functions into account. This does not invalidate our soundness theorem.

\paragraph*{In Memoriam.}

Martin Hofmann and the 4th author have discussed and developed a large part
of the theoretical body of this work together. Unfortunately, Martin's tragic hiking accident in January 2018
prevented the conclusion of this collaboration.
Due to Martin's great interest and contributions to this work, we felt it fitting to include him as first author.
We've tried our best to finalise the common conceptions and ideas. Still automation and continued research clarified
a number of issues and also brought a different focus on various matters of the material presented.
Martin Hofmann's work was revolutionary in a vast amount of fields and it
will continue to inspire future researchers---like he inspired us.


\end{document}